\newtheorem{theo}{Theorem}[section]
\newtheorem{lemma}[theo]{Lemma}
\newtheorem{proposition}[theo]{Proposition}
\newtheorem{assumption}{Assumption}[section]
\theoremstyle{definition}
\newtheorem{remark}[theo]{Remark}
\theoremstyle{plain}
\def\mbd{\mbox{d}}
\journal{Journal of Computational Physics}
\begin{document}

\begin{frontmatter}

%% Title, authors and addresses

%% use the tnoteref command within \title for footnotes;
%% use the tnotetext command for theassociated footnote;
%% use the fnref command within \author or \address for footnotes;
%% use the fntext command for theassociated footnote;
%% use the corref command within \author for corresponding author footnotes;
%% use the cortext command for theassociated footnote;
%% use the ead command for the email address,
%% and the form \ead[url] for the home page:
%% \title{Title\tnoteref{label1}}
%% \tnotetext[label1]{}
%% \author{Name\corref{cor1}\fnref{label2}}
%% \ead{email address}
%% \ead[url]{home page}
%% \fntext[label2]{}
%% \cortext[cor1]{}
%% \address{Address\fnref{label3}}
%% \fntext[label3]{}

\title{Hybrid framework for the simulation of stochastic chemical kinetics}

%% use optional labels to link authors explicitly to addresses:
%% \author[label1,label2]{}
%% \address[label1]{}
%% \address[label2]{}

\author[aduncan]{Andrew Duncan\corref{cor1}}
\ead{a.duncan@imperial.ac.uk}

\author[rerban]{Radek Erban}
\ead{erban@maths.ox.ac.uk}

\author[kzygalakis]{Konstantinos Zygalakis}
\ead{k.zygalakis@soton.ac.uk}

\cortext[cor1]{Corresponding author}
\address[aduncan]{Department of Mathematics, Imperial College,
South Kensington Campus, London, SW7 2AZ, United Kingdom}
\address[rerban]{Mathematical Institute, University of Oxford,
Radcliffe Observatory Quarter, Woodstock Road, Oxford, OX2 6GG,
United Kingdom}
\address[kzygalakis]{School of Mathematics, University of Edinburgh,
Peter Guthrie Tait Road, Edinburgh, EH9 3FD, United Kingdom}
\begin{abstract}
\noindent Stochasticity plays a fundamental role in various 
biochemical processes, such as cell regulatory networks and enzyme 
cascades. Isothermal, well-mixed systems can be modelled as Markov 
processes, typically simulated using the Gillespie Stochastic 
Simulation Algorithm (SSA)~\cite{Gillespie:1977:ESS}.  
While  easy to implement and exact, the computational cost of 
using the Gillespie SSA to simulate such systems can become prohibitive 
as the frequency of reaction events increases.  This has motivated 
numerous coarse-grained schemes, where the ``fast'' reactions 
are approximated either using Langevin dynamics or deterministically.  
While such approaches provide a good approximation when all 
reactants are abundant, the approximation breaks down when one 
or more species exist only in small concentrations and 
the fluctuations arising from the discrete nature of the 
reactions becomes significant. This is particularly problematic 
when using such methods to compute statistics of extinction times 
for chemical species, as well as simulating non-equilibrium 
systems such as cell-cycle models in which a single species 
can cycle between abundance and scarcity. In this paper, 
a hybrid jump-diffusion model for simulating well-mixed stochastic 
kinetics is derived. It acts as a bridge between the Gillespie SSA 
and the chemical Langevin equation. For low reactant reactions 
the underlying behaviour is purely discrete, while purely diffusive 
when the concentrations of all species is large, with the two 
different behaviours coexisting in the intermediate region. 
A bound on the weak error in the classical large volume scaling 
limit is obtained, and three different numerical discretizations 
of the jump-diffusion model are described.  The benefits of such 
a formalism are illustrated using computational examples.
\end{abstract}

\begin{keyword}
Chemical Master Equation \sep 
Chemical Langevin Equation \sep 
Jump-Diffusion Process \sep 
Hybrid Scheme. 
%% keywords here, in the form: keyword \sep keyword

%% PACS codes here, in the form: \PACS code \sep code

%% MSC codes here, in the form: \MSC code \sep code
%% or \MSC[2008] code \sep code (2000 is the default)

\end{keyword}

\end{frontmatter}

%% \linenumbers

%% main text
\section{Introduction}
\label{sec:intro}
\noindent
Biochemical systems with small numbers of interacting components  
have increasingly been studied in the recent years. Examples include 
the phage $\lambda$ lysis-lysogeny decision 
circuit~\cite{Arkin:1998:SKA}, circadian rhythms~\cite{Villar:2002:MNR} 
and cell cycle \cite{Kar:2009:ERN}. It is this small number of interacting
components that makes the appropriate mathematical framework for 
describing these systems a stochastic one. In particular, the 
kinetics of the different species is accurately described, under 
appropriate assumptions, by a  continuous-time discrete-space 
Markov chain. The theory of stochastic  
processes~\cite{Gardiner:1985:HSM,vanKampen:2007:SPP} allows the 
association of the Markov chain with an underlying  master equation, 
which is a set of ordinary differential equations (ODEs), possible 
of infinite dimensions, that describe, at each point in time, the 
probability density of all the different possible states of the system. 
In the context of biochemical systems this equation is known as the
\emph{chemical master equation} (CME).

The high dimensionality of the CME makes it intractable to solve 
in practice. In particular, with the exception of some very simple 
chemical systems~\cite{MR2272420} analytic solutions of the CME 
are not available. One way to deal with this issue is to resort 
to stochastic simulation of the underlying Markov chain. 
The stochastic simulation algorithm (SSA) developed by 
Gillespie~\cite{Gillespie:1977:ESS} exactly simulates trajectories 
of the CME as the system evolves in time. The main idea behind 
this algorithm, is that at each time point, one samples a waiting 
time to the next reaction from an appropriate exponential 
distribution, while another draw of a random variable is then 
used to decide which of the possible reactions will actually 
occur. For suitable classes of chemically reacting
systems, one can sometimes use exact algorithms which, although 
equivalent to the Gillespie SSA are less computationally
intensive. Examples include the Gibson-Bruck Next Reaction 
Method~\cite{Gibson:2000:EES} and the Optimized Direct 
Method~\cite{Cao:2004:EFS}. These algorithms can be further
accelerated by using parallel computing, for example, 
on Graphics Processing Units~\cite{Klingbeil:2010:FTT,Klingbeil:2010:SPS}.

All the methods described above can only go so far in terms 
of speeding up the simulations, since even with the all 
possible speed ups running the SSA can be computationally 
intensive for realistic problems. One approach to alleviate 
the computational cost is to employ different approximations 
on the level of the description of the chemical system. For example, 
in the limit of large molecular populations, the waiting time becomes, 
on average, very small and under the law of mass action the time 
evolution of the kinetics is described by a system of ODEs. 
This system is known as the \emph{reaction rate equation} which 
describes, approximately, the time evolution of the mean of the 
evolving Markov chain. An intermediate regime between the SSA 
and the reaction rate equation is the one where stochasticity 
is still important, but there exist a sufficient number of 
molecules to describe the evolving kinetics by a continuous model.  
This regime is called the chemical Langevin equation 
(CLE)~\cite{vanKampen:2007:SPP,gillespie2000chemical}, which is 
an It\^{o} stochastic differential equation (SDE) driven by 
a multidimensional Wiener process. In this case the corresponding 
master equation for the CLE is called the chemical Fokker-Planck 
equation (CFPE) which is a $N$-dimensional parabolic partial 
differential equation, where $N$ is the number of the different 
chemical species present in the system. 

The fact that stochasticity is still present in the  description 
of the chemical system, combined with the fact that the underlying 
CFPE is more amenable to rigorous analysis than the CME, has made 
the CLE equation a very popular regime used in 
applications~\cite{salis,SNIPER,YK09}. However, while there are 
benefits to working with the CLE/CFPE, this approximation is only 
valid in the limit of large system volume and provides  
poor approximations for systems possessing one or more chemical
species with low copy numbers. Furthermore, 
unlike the SSA/CME which ensures that there is always a positive 
(or zero) number of molecules in the system, the CFPE and CLE can 
give rise to negative concentrations, so that the chemical species 
can attain negative copy numbers. This can have serious mathematical
implications, since the CFPE equation might break down completely, 
due to regions in which the diffusion tensor is no longer positive 
definite, which makes the underlying problem ill-posed. 
On the level of the CFPE, one way to deal with such positivity 
issues is to truncate the domain and artificially impose no 
flux-boundary conditions along the 
domain boundary~\cite{cotter2013adaptive,ferm2004adaptive,sjoberg2009fokker,
SNIPER,grima2011accurate,cottererror}, which will have a negligible effect 
on the solution when it is concentrated far away from the boundary.  
When all chemical species exist in sufficiently high concentration, 
Dirichlet boundary conditions can also be used if one solves the 
stationary CFPE as an eigenvalue problem~\cite{Liao2015}.  
However, as shown in~\cite{duncan2015noise}, these artificial 
boundary conditions can result in significant approximation errors 
when the solution is concentrated near the boundary. Other alternatives 
have been proposed to overcome the behaviour of the CLE close to 
the boundary, either by suppressing reaction channels which may 
cause negativity near the boundary~\cite{dana2011physically}, 
or by extending the domain of the process to allow exploration 
in the complex domain~\cite{schnoerr2014complex}.  
In the later approach the resulting process, called the Complex 
CLE will have a positive definite diffusion tensor for all time, 
thus avoiding such breakdowns entirely. However, this method does 
not accurately capture the CME behaviour near the boundary, and 
in areas where the CLE is a poor approximation to the CME, the 
corresponding Complex CLE will suffer equally.    

These issues have motivated a number of hybrid schemes which have
been obtained by treating only certain chemical species as 
continuous variables and the others as discrete~\cite{HGK15,FCZ14,Safta2015}.
By doing so, such schemes are able to benefit from the computational 
efficiency of continuum approximations while still taking into account 
discrete fluctuations when necessary. Typically such schemes involve
partitioning the reactions into ``fast'' and ``slow'' reactions, 
with the fast reactions modelled using a continuum approximation 
(CLE or the reaction rate equation), while using Markov jump process 
to simulate the discrete reactions. Chemical species which are 
affected by fast reactions are then modelled as continuous variables 
while the others are kept discrete. Since the reaction rate depends 
on the state, it is possible that some fast reactions become slow 
and vice versa. This is typically accounted for by periodically 
repartitioning the reactions. Based on this approach, a number 
of hybrid models have been proposed, such 
as~\cite{Haseltine:2002:ASC,crudu2009hybrid}, which couple 
deterministic reaction-rate equations for the fast reactions 
with Markov jump dynamics for the slow, resulting in 
a piecewise-deterministic Markov process for the entire system.  
Error estimates for such systems, in the large volume limit, were 
carried out in~\cite{jahnke2012error}.  Similar methods have 
been proposed, such as~\cite{hasenauer2014method} and more 
recently~\cite{smith2015model}. 
Other hybrid schemes~\cite{hellander2007hybrid,menz2012hybrid} 
also involve a similar partition into slow and fast species, 
however the evolution of the slow species is obtained by solving 
the CME directly, coupled to a number of reaction-rate equations 
for the fast reactions. The hybrid system is thus reduced to 
a system of ODEs.  An error analysis of these schemes was 
carried out in~\cite{jahnke2011reduced}.

In this paper, we propose a hybrid scheme which uses Langevin 
dynamics to simulate fast reactions coupled with jump/SSA dynamics 
to simulate reactions in which the discreteness cannot be discounted.  
Thus, unlike the previously proposed models, both the continuous 
and discrete parts of the model are described using a stochastic 
formulation. Moreover, our scheme does not explicitly keep track 
of fast and slow reactions, but rather, the process will perform 
Langevin dynamics in regions of abundance, jump dynamics in regions 
in where one of the involved chemical species are in small 
concentrations,  and a mixture of both in intermediate regions.  
The resulting process thus becomes a jump-diffusion process with 
Poisson distributed jumps.  The preference of jump over Langevin 
dynamics is controlled for each individual reaction by means of 
a \emph{blending} function which is chosen to take value $1$ 
in regions of low concentration, $0$ in regions where all 
involved chemical species are abundant, and smoothly interpolates 
in between.  The choice of the blending regions will depend 
on the constants of the propensity and are generally chosen 
so that the propensity is large in the continuum region, and 
small in the discrete region. Hybrid models for chemical 
dynamics involving both jump and diffusive dynamics have been 
previously studied in various contexts. Recently, 
a method~\cite{ganguly2014jump} based on a similar coupling of 
SSA and Langevin dynamics was proposed. The authors introduce 
a partition of reactions into fast and slow reactions, applying 
the diffusion approximation to the fast reactions to obtain 
a jump-diffusion process. Based on an a-posteriori error estimator 
the algorithm periodically repartitions the species accordingly.  
By introducing the blending region our approach no longer requires 
periodic repartitioning. Other works which have considered hybrid 
schemes based on jump-diffusion dynamics 
include~\cite{angius2015approximate}. In \cite{FCE11,FCZ14} 
a hybrid scheme based on a similar domain decomposition idea was 
proposed for simulating spatially-extended stochastic-reaction 
diffusion models. In one part of the domain a SDE was used 
to simulate the position of the particles and on the other part 
a compartment-based jump process for diffusion was used. These two 
domains were separated by a sharp interface, where corrections to 
the transition probabilities at the interface were applied to 
ensure that probability mass was transferred between domains.  
While such a direct matching between continuum and discrete 
fluxes at the interface can accurately simulate systems having 
only reactions with unit jumps, for systems possessing jumps 
of length $2$ or higher, such a direct coupling would cause 
non-physical results. This scenario is analogous to 
\emph{ghost forces} which arise in quasi-continuum methods used 
in the multiscale modelling of materials~\cite{badia2008atomistic}.  
Overlap regions are also necessary for coupling Brownian dynamics 
(SDEs) with mean-field partial differential equations~\cite{FFCE13}.

The paper is organised as follows. In Section \ref{sec:prelims} 
after reviewing the CME/SSA and CLE/CFPE formalisms we introduce 
blending functions and the hybrid jump-diffusion formalism.  
In Section \ref{sec:derivation} we derive weak error bounds for 
the hybrid scheme in the limit of large volume, and in particular 
show that the hybrid scheme does not perform worse than the CLE 
in this regime.  In Section \ref{sec:algorithm} we describe 
three possible discretisations of the process, which can be used in 
practise to simulate the jump-diffusion process. A number 
of numerical experiments which demonstrates the use of the hybrid 
scheme are detailed in Sections  \ref{sec:lotka_volterra},
\ref{sec:dimer_steady_state} and \ref{sec:exit_time}.

\section{Preliminaries}
\label{sec:prelims}

\noindent
Consider a biochemical network of $N$ chemical species interacting 
via $R$ reaction channels within an isothermal reactor of fixed 
volume $V$.   For $i = 1,2, \ldots, N$, denote by $X_i(t)$ the 
number of molecules of species $S_i$ at time $t$, and let 
$\mathbf{X}(t) = (X_1(t),X_{2}(t),\ldots, X_N(t))$.   Under 
the assumption that the chemical species are well-mixed it 
can be shown \cite{gillespie1992rigorous} that $\mathbf{X}(t)$ is 
a continuous time Markov process.  When in state $\mathbf{X}(t)$, 
the  $j$-th  reaction gives rise to a transition 
$\mathbf{X}(t) \rightarrow {\mathbf{X}(t)} + \boldsymbol{\nu}_j$ 
with exponentially distributed waiting time with inhomogeneous 
rate $\lambda_j(\mathbf{X}(t))$, where $\lambda_j(\cdot)$ 
and $\boldsymbol{\nu}_j \in \mathbb{Z}^N$ denote the propensity 
and stoichiometric vector corresponding to the $j$-th reaction,
respectively. More specifically, each reaction is of the 
form
$$
{\mu_r}_1 X_1 + {\mu_r}_2 X_2 + 
\ldots {\mu_r}_N X_N \xrightarrow{k_r} {{\mu_r}_1}' X_1 
+ {{\mu_r}_2}' X_2 + \ldots {{\mu_r}_N}' X_N,
$$
where $r = 1,2, \ldots, R,$ and
$\mu_{ri},$ ${{\mu_r}_1}' \in \mathbb{N}=\{0,1,2,\dots\}$, 
for $i=1,2,\ldots, N$. Let us denote
$\boldsymbol{\mu}_r = (\mu_{r1}, \mu_{r2}, \dots, \mu_{rN})$
and
$\boldsymbol{\mu}_r' = (\mu_{r1}', \mu_{r2}', \dots, \mu_{rN}')$.
The stoichiometric vectors 
$\boldsymbol{\nu}_1,\boldsymbol{\nu}_{2}, \ldots, \boldsymbol{\nu}_R$
are then given by
$$
\boldsymbol{\nu}_r = \boldsymbol{\mu}_r' - \boldsymbol{\mu}_r
$$
and describe the net change in molecular copy numbers 
which occurs during the $r$-th reaction. 
Under the assumption of mass action kinetics, 
the propensity function $\lambda_r$ for the $r$-th reaction is 
$$
\lambda_r(\mathbf{x}) 
= 
k_{r} \prod_{j=1}^{N} \frac{x_j!}{(x_j - \mu_{rj})!},
$$
assuming that $n! = 1$ if $n \leq 0$, to simplify notation. 
Within the interval $[t, t +  \mbd t)$,  
we update
$
\mathbf{X}(t) \rightarrow \mathbf{X}(t) + \boldsymbol{\nu}_j, \mbox{ with probability } \lambda_j(\mathbf{X}(t))\,\mbd t + o(\mbd t).
$
The process $\mathbf{X}(t)$ can thus be expressed as the sum of $R$ 
Poisson processes with inhomogeneous rates $\lambda_j(\mathbf{X}(t))$. 
As noted in \cite{gillespie2000chemical,kurtz1981approximation},
$\mathbf{X}(t)$ can be expressed as a random time change of unit 
rate Poisson processes,
\begin{equation} \label{eq:poisson_rep}
\mathbf{X}(t)=\mathbf{\mathbf{X}(0)}+\sum_{r=1}^{R}{P_{r}}\left(\int_{0}^{t}\lambda_{r}(\mathbf{\mathbf{X}(s)})ds\right)\boldsymbol{\nu}_{r},
\end{equation}
where  ${P_{r}}$ are independent unit-rate Poisson processes. 
This is a continuous time Markov process with infinitesimal 
generator 
\begin{equation}
\mathcal{L}_0f(\mathbf{x}) 
= 
\sum_{r=1}^{R}\lambda_r(\mathbf{x})(f(\mathbf{x} + \boldsymbol{\nu}_r) - f(\mathbf{x})).
\label{notrescaledgenerator}
\end{equation}
The classical  method for sampling realisations of $\mathbf{X}(t)$ 
is the Gillespie SSA \cite{Gillespie:1977:ESS}. Given the 
current state $\mathbf{X}(t)$ at time $t$, the time of next 
reaction $t + \tau$ and state $\mathbf{X}(t + \tau)$ are sampled 
as follows: 
\begin{enumerate}
\item Let $\lambda_0 = \sum_{r=1}^R \lambda_r(\mathbf{X}(t))$.
\item Sample $\tau \sim -\log(u)/\lambda_0$, where $u \sim U[0,1]$.
\item Choose the next reaction $r$ with probability $\lambda_r(\mathbf{X}(t))/\lambda_0$, where $r= 1,2,\ldots, R$.
\item $\mathbf{X}(t + \tau) = \mathbf{X}(t) + \boldsymbol{\nu}_r$.
\end{enumerate}
We note that in advancing the system from time $t$ to time 
$t+\tau$ one needs to generate two random numbers each time.  
Based on the time changed representation (\ref{eq:poisson_rep}) 
one can derive an alternative algorithm, known as the 
\emph{Next Reaction Method} of Gibson and 
Bruck~\cite{Gibson:2000:EES}. Indeed, for a fixed realisation 
of each unit rate Poisson process $P_1, P_2, \ldots, P_R$, define 
$F_r(t)$ to be the last jump time of $P_r$ before time $t$. 
Then for each $r$, the next jump time of $P_r$ after time $F_r(t)$ 
will be distributed as $F_r(t)  - \log u$, where $ u \sim U[0,1]$. 
Clearly, as the process $X(t)$ evolves, the $r$-th reaction will 
then 
occur at $t+\tau_r$ satisfying
$$
T_r(t + \tau_r) = F_r(t) - \log u,
\qquad\quad
\mbox{where}
\qquad\quad
T_r(t) = \int_0^t \lambda_r(\mathbf{X}(s))\,\mbox{d}s.
$$
This provides the basis of the Next Reaction method.  
Suppose we are time $t$, the next reaction will occur at 
$t + \tau_{min}$ for
$$
\tau_{min}  
= 
\mathop{\mbox{argmin}}_{r \in \lbrace 1,2, \ldots, R\rbrace}
\left\lbrace \tau_r \, :  F_r(t) - \log u_r 
= T_r(t + \tau_r) \right\rbrace,
$$
where $u_r \sim U[0,1]$ are independently distributed random 
numbers. Noting that the values of the propensities do not 
change within $[t, t + \tau_{min})$ we have 
$T_r(t + \tau) = T_r(t) + \tau_r \lambda_r(\mathbf{X}(t))$, 
so that the next reaction time is given by
$$
\tau_{min}  
= 
\mathop{\mbox{argmin}}_{r\in\lbrace1,2,\ldots, R\rbrace}
\left\lbrace\, \frac{F_r(t) - \log u_r - T_r(t)}{\lambda_r(\mathbf{X}(t))}
\right\rbrace,
$$
at which time the reaction that occurs is the one for which 
$\tau_r = \tau_{min}$. This leads to the following exact 
algorithm for sampling realizations of $\mathbf{X}(t)$.
\begin{enumerate}
\item Set the initial number of molecules of each species, 
set $t=0$.
\item Calculate the propensity function $\lambda_{r}$ for each reaction.
\item Generate $R$ independent random numbers $u_{r}\sim U[0,1]$.
\item Set $F_{r}=-\log(u_{r})$ and $T_r = 0$
      for each $r=1,2,\dots,R$.
\item Set $\tau_{r}=(F_{r}-T_{r})/\lambda_{r}$
      for each $r=1,2,\dots,R$.
\item Set $\tau_{min}=\min_{r}\{\tau_{r}\}$ and let ${\mu}$ be the reaction for which this minimum is realised.
\item Set $t=t+\tau_{min}$ and update the number of each molecular species according to reaction $\mu$.
\item For each k, set $T_{r}=T_{r}+\lambda_{r}\tau_{min}$, and for the reaction $\mu$, let $u \sim U[0,1]$ and set $F_{r}=F_{r}-\log{u}$.
\item Recalculate the propensity functions $\lambda_{r}$.
\item Return to step 5 or quit.
\end{enumerate}
This algorithm was introduced by  
Gibson and Bruck~\cite{Gibson:2000:EES} who additionally proposed 
the introduction of an indexed priority queue to efficiently search 
for the minimum required in step 6, along with a dependency graph 
structure to efficiently update propensity values in step 9.  
This makes it less computationally intensive from the Gillespie SSA
when simulating systems with many reaction channels~\cite{Cao:2004:EFS}.

\subsection{Diffusion Approximation}

\noindent
For $r=1,2, \ldots, R$ define $\widetilde{\lambda}_r(\mathbf{x})$ 
to be a smooth, non-negative extension of $\lambda_r(\mathbf{x})$ 
from $\mathbb{N}^N$ to $\mathbb{R}^N$ (the precise conditions on 
this extension are given in Section \ref{sec:derivation}).  
Given the extended propensities, a commonly used approximation 
of (\ref{eq:poisson_rep}) is the CLE, given by the 
following It\^{o} SDE
\begin{equation}
\label{eq:cle}
\mbd \mathbf{Y}(t) 
= 
\sum_{r=1}^{R}
\boldsymbol{\nu}_r \,
\widetilde{\lambda}_r(\mathbf{Y}(t))\,\mbd t 
+ \sum_{r=1}^{R}\boldsymbol{\nu}_r \,
\sqrt{\widetilde{\lambda}_r(\mathbf{Y}(t))}\, \mbd W_r(t),
\end{equation}
where $W_r(t)$ are mutually independent standard Brownian motions. 
This diffusion approximation is valid in the large volume regime, 
where all species exist in abundance, and all reactions occur 
frequently, see for example~\cite{gillespie2000chemical,MBZ10}.  
More precisely, one can show strong convergence of $\mathbf{Y}(t)$ 
to $\mathbf{X}(t)$ over finite time intervals $[0, T]$, 
see~\cite{anderson2011continuous,ganguly2014jump}.  
Clearly, the lifting of $\lbrace \lambda_r \rbrace_{r=1}^R$ 
to $\lbrace \widetilde{\lambda}_r \rbrace_{r=1}^R$ is not unique,  
and different extensions will give rise to different diffusion 
approximations.  However, as we shall see in 
Section \ref{sec:derivation}, in the classical large 
volume rescaling,  the dynamics of the process will be 
largely determined by the value of the propensities on 
the rescaled grid $\mathbb{e}\mathbb{N}^N$, and indeed, 
subject to the extension satisfying a number of assumptions, 
different extensions will lead to the diffusion approximation 
having  weak error of the same order.

\subsection{The Hybrid Scheme}
\label{subsec:stoc_fram}

\noindent
In this section, we introduce a jump-diffusion process which 
provides an approximation which is intermediate between the 
Gillespie SSA and CLE by introducing a series of \emph{blending} 
functions $\beta_1, \beta_{2}, \ldots, \beta_R$ which are used 
to blend the dynamics linearly between the SSA jump process and the 
CLE. More specifically, given $R$ smooth functions
$\beta_r:\mathbb{R}^d\rightarrow [0,1]$, $r=1,2, \ldots, R$ 
we consider the following It\^o jump-diffusion equation
\begin{equation}
\label{eq:hybrid}
\begin{aligned}
\mathbf{Z}(t) 
= 
\mathbf{Z}(0)
&+ \sum_{r=1}^{R}
P_{r}\left(\int_{0}^{t}
\beta_r(\mathbf{Z}(s))
\, \lambda_r(\llbracket \mathbf{Z}(s) \rrbracket) 
\, \mbd s\right)\boldsymbol{\nu}_{r}
\\
&+ 
\sum_{r=1}^{R} \boldsymbol{\nu}_r 
\int_{0}^{t} \Big(
1-\beta_r(\mathbf{Z}(s)) \Big) \,
\widetilde{\lambda}_r(\mathbf{Z}(s))\,\mbd s \\
&+
\sum_{r=1}^{R} \boldsymbol{\nu}_{r}
\int_{0}^{t}
\sqrt{ \Big(
1 - \beta_r(\mathbf{Z}(s)) \Big)
\, \widetilde{\lambda}_r(\mathbf{Z}(s))}
\,\mbd W_{r}(s),
\end{aligned}
\end{equation}
where $\lbrace W_r \rbrace_{r=1}^{R}$ and 
$\lbrace P_r\rbrace_{r=1}^R$ 
are standard Wiener and Poisson processes, respectively, all 
mutually independent, and $\llbracket \mathbf{x} \rrbracket$ 
is the closest point 
in the lattice $\mathbb{Z}^N$ to $\mathbf{x} \in \mathbb{R}^N$. 
Thus, (\ref{eq:hybrid}) describes a jump-diffusion Markov process 
with infinitesimal generator $\mathcal{G}$ defined by
\begin{eqnarray}
\mathcal{G}f(\mathbf{z}) \!\!\!&=&\!\!\!
\sum_{r=1}^{R} \!\beta_r(\mathbf{z})
\lambda_r(
\llbracket \mathbf{z} \rrbracket)
\left[f(\mathbf{z} + \boldsymbol{\nu}_r) - f(\mathbf{z})\right]
+ 
\sum_{r=1}^{R}(1 - \beta_r(\mathbf{z})) 
\, \widetilde{\lambda}_r(\mathbf{z}) \, \boldsymbol{\nu}_r
\!\cdot\!\nabla f(\mathbf{z}) \nonumber \\
\!\!\!&+&\!\!\! \frac{1}{2} \sum_{r=1}^{R}
(1-\beta_r(\mathbf{z}))
\,\widetilde{\lambda}_r(\mathbf{z})
\,(\boldsymbol{\nu}_r\otimes  \boldsymbol{\nu}_r) 
: \nabla\nabla f(\mathbf{z}),
\label{eq:hybrid_generator}
\end{eqnarray}
for all $f \in C^2_0(\mathbb{R}^N)$,
where $\otimes$ stands for the tensor product, 
and $A:B=\text{trace}(A^{T}B)$ for square matrices $A$ and $B$. 
The first term on the right hand side of (\ref{eq:hybrid_generator}) 
captures the jump behaviour of the process, while the remaining two
terms encode the effect of the diffusive dynamics.  

We see that in regions where $\beta_r(x) = 1$ the dynamics of 
the $r$-th reaction is modelled by a pure jump process. Conversely, 
when $\beta_r(x) = 0$ the dynamics are purely diffusive, corresponding 
to CLE dynamics. In intermediate regions where $0 < \beta_r(x) < 1$ 
we obtain a mixture of the two. The rationale is to choose 
$\beta_r$ to be $0$ in regions where the CLE provides a valid 
approximation of the biochemical system, and $\beta_r$ to be $1$ in 
regions where the diffusion approximation breaks down, i.e. in regions 
where the concentrations of certain species are low and the discrete 
behaviour becomes significant. An example of a trajectory is shown 
in Figure \ref{fig:vanity}, where the blue line depicts diffusive 
dynamics, while the red lines indicate jumps.  We note that the 
process $\mathbf{Z}(t)$ can still attain negative 
(and thus non-physical) states, however, 
$\llbracket \mathbf{Z}(t) \rrbracket$ is always non-negative.  
A natural interpretation is that one should consider the 
cell $\llbracket \mathbf{Z}(t) \rrbracket$ as the actual 
observed dynamics of the system, and the state of the underlying 
process $\mathbf{Z}(t)$ is a hidden Markov model which is not 
observed directly.

\begin{figure} [t]
\centering
\includegraphics[width=\textwidth]{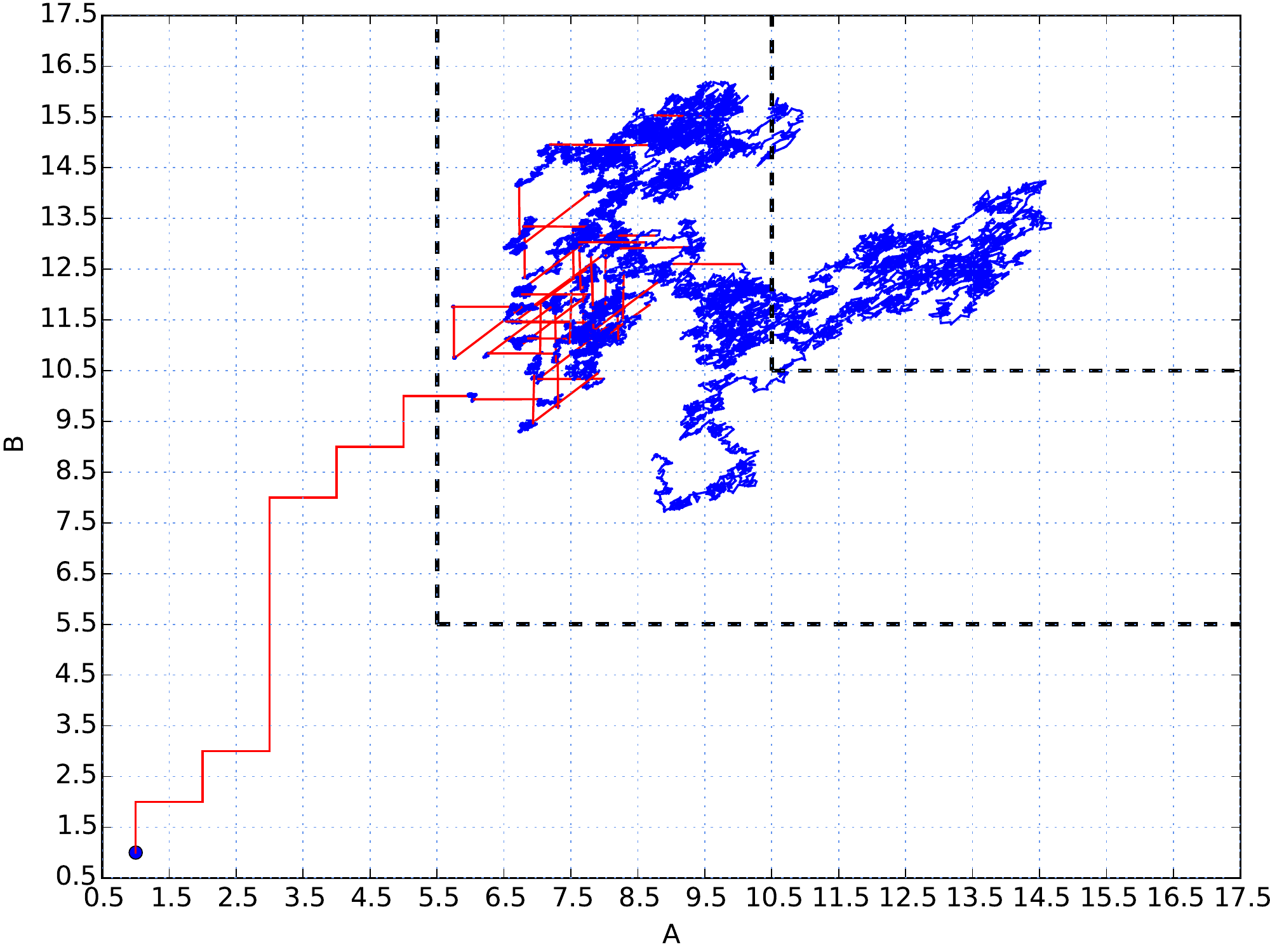}
\caption{\emph{A typical trajectory for hybrid approximation of 
the two-species chemical system described in $(\ref{eq:dimer})$
starting from $(A, B) = (1,1)$.  Red lines denote jump dynamics 
while blue lines CLE dynamics. The black lines demarcate 
the ``blending'' region.}}
\label{fig:vanity}
\end{figure}

\subsection{Choice of Blending Functions}
\label{sec:blending}

\noindent
The blending functions $\beta_1(x), \beta_2(x), \ldots, \beta_R(x)$ 
are to be considered as simulation parameters which are chosen to 
balance the trade-off between the computational cost of using the
SSA and the error arising from the diffusion approximation. 
Generally chosen so that Langevin dynamics are only used in regions 
where the reactions are considered fast. Since not all species 
are involved in every reaction, it is natural to choose each 
blending function differently.  For a single species system, 
a natural choice of blending function is the following piecewise 
linear function
$$
\beta(x, I_1, I_2) 
= 
\begin{cases} 1, \rule{0pt}{2mm} & \mbox{ if } x \le I_1,  \\  
\displaystyle \frac{I_2-x}{I_2-I_1}, \rule{0pt}{6.3mm} & \mbox{ if } I_1 < x < I_2, \\ 
0, \rule{0pt}{5mm} & \mbox{ if }x \ge I_2,\end{cases}
$$
where $0 < I_1 < I_2$ are the boundaries between the different 
regions.  With this choice of $\beta(\cdot, I_1, I_2)$, the 
hybrid process will perform purely jump dynamics for 
$0 < \mathbf{Z}(t) < I_1$, purely diffusive dynamics 
for $I_2 < \mathbf{Z}(t) < \infty$, and jump-diffusion 
in between.

For chemical systems with $N$ species we can construct blending 
functions for each reaction as follows.  Let $S_r$ be the set 
of chemical species involved in the $r$-th reaction (both as 
reactants, and products of the reaction).  Then we can 
define $\beta_r(\mathbf{x})$, $r=1,2,\dots,R$, as follows
\begin{equation}
\label{eq:multi_species_blending}
\beta_r(\mathbf{x}) = 1 - \prod_{n \in S_r}(1 - \beta(x_n, I^n_1, I^n_2)),
\end{equation}
where $I^n_1 < I^n_2$, $n=1,2,\ldots, N$, are the boundaries 
for each individual chemical species.  With this choice of blending 
function, when $\mathbf{Z}(t)$ is in a state where one of the species 
involved in the $r$-th reaction is not abundant, the hybrid process 
will blend between jumps and diffusion to simulate the corresponding reaction.
The choice of the boundaries $I_1^n$ and $I_2^n$ is important 
to correctly delineate between discrete and continuous behaviour. 
As will be seen in the numerical experiments in 
Section~\ref{sec:dimer_steady_state}, the accuracy of the 
scheme is dependent on the width of the blending region.  If the 
blending region $I_2^n - I_1^n$ is small (i.e. close to 1), 
the step size for the discretised CLE must be accordingly decreased 
to maintain constant error.  

\begin{remark}
While $(\ref{eq:multi_species_blending})$ is a natural blending function 
for the typical biochemical systems where the diffusion approximation 
breaks down only close to the boundary of the positive orthant, for 
constrained systems, one must have jump dynamics near all the boundaries 
of the system.  A typical example is the reversible isomerisation model 
$$
{\mbox{ \raise 0.851 mm \hbox{$A$}}}
\;
\mathop{\stackrel{\displaystyle\longrightarrow}\longleftarrow}^{k_1}_{k_2}
\;
{\mbox{\raise 0.851 mm\hbox{$B$,}}}
$$ 
which can be modelled as a single species 
birth-death process for $A$ with birth rate $\lambda(m) = k_2(M - m)$ and 
death rate $\mu(m) = k_1 m$, where $M$ is the total molecule count.  
For such systems, an appropriate blending function would take value 
$1$ in neighbourhoods of both $m = 0$ and $m = M$.
\end{remark}

\section{Derivation and Consistency of the Hybrid Scheme}
\label{sec:derivation}

\noindent
In this section we shall make explicit the regime in which the hybrid 
scheme correctly captures the dynamics of the original process, and 
subsequently derive weak error estimates for {the} expectation of 
observables at a finite time $T > 0$.  Let $\mathbb{L} := \mathbb{N}^N$ 
be the lattice of possible states.  For most biochemical systems, we 
can make the following natural assumptions.
\begin{assumption}
The propensity functions satisfy $\lambda_j(x) \geq 0$ for 
all $x \in \mathbb{L}$, and $\lambda_j(x) = 0$ if $x \in \mathbb{L}$ 
and $x + \boldsymbol{\nu}_j \in \mathbb{Z}^N \setminus \mathbb{L}$.
\end{assumption}

\noindent
In particular, we are ensured that the jump process $\mathbf{X}(t)$ 
never attains a negative state. 

\noindent
\begin{assumption}
For fixed $T > 0$, the set $\lbrace \mathbf{X}(t) \,|\, t \in [0,T]\rbrace$ 
is almost surely constrained within a bounded subset $\Omega$ of $\mathbb{L}$. 
\end{assumption}
\noindent
We note that the domain $\Omega$ will depend on the initial condition 
$\mathbf{X}(0) = \mathbf{X}_0$. As it stands this assumption will not 
hold for general chemical reacting systems. Under suitable conditions 
on the propensities it is possible to replace this assumption with 
a localization result showing that the probability of $\mathbf{X}(t)$ 
escaping the bounded set is exponentially small. We shall avoid this 
approach for simplicity, simply noting that one can always ensure 
this assumption by setting propensities to zero outside a fixed 
bounded region. 

Chemical Langevin dynamics $\mathbf{Y}(t)$ are only a valid approximation 
of $\mathbf{X}(t)$ in the large volume limit.  To study this regime, 
we introduce a system size $V \gg 1$ which can be viewed as the 
(dimensionless) volume of the reactor. Writing $\varepsilon = V^{-1}$, 
we then rewrite the molecular copy number $\mathbf{X}(t)$ as 
$\varepsilon^{-1} \mathbf{X}^\varepsilon(t)$ where 
$\mathbf{X}^{\varepsilon}(t)$ will be the vector of concentrations 
of each chemical species. We shall assume that each rate constant 
$k_r$ satisfies
$$
k_r = d_r \, \varepsilon^{-z_r},	
\quad \mbox{ where } d_r > 0
\quad 
\mbox{ and } 
\quad
z_r =-1+ \sum_{k=1}^K \mu_{rk}.
$$ 
Given this scaling assumption, we can always write the 
propensity for the $r$-th reaction, $r=1,2,\dots,R$, as 
$$
\lambda_r
\left(
\frac{\mathbf{x}}{\varepsilon}
\right) 
= 
\frac{1}{\varepsilon}
\, \lambda^\varepsilon_r(\mathbf{x}), 
\quad \mbox{ for } \quad
\mathbf{x} \in \varepsilon \mathbb{L}.
$$
where $\lambda^\varepsilon_r(\mathbf{x})$ is $O(1)$ with 
respect to $\varepsilon$. Using (\ref{notrescaledgenerator}),
the generator of the rescaled process $\mathbf{X}^\varepsilon(t)$ 
is given as follows:
\begin{align}
\label{eq:generator_epsilon}
\mathcal{L}^\varepsilon f(\mathbf{x}) 
&= \frac{1}{\varepsilon}
\sum_{r=1}^{R}\lambda^\varepsilon_r(\mathbf{x})(f(\mathbf{x} + \varepsilon 
\boldsymbol{\nu}_r) - f(\mathbf{x})),	
\quad \mbox{ for } \quad \mathbf{x} \in \varepsilon\,{\mathbb{L}}.
\end{align}
We now introduce the hybrid jump-diffusion scheme. To do so we must 
extend propensities $\lambda^\varepsilon(\mathbf{x})$ from the discrete 
lattice $\varepsilon\mathbb{L}$ to 
$\widetilde{\lambda}^\varepsilon(\mathbf{x})$ on the continuous 
space $\mathbb{R}^N$. We shall make the following assumptions 
on the extension.
\begin{assumption}
\label{ass:extension_assumptions}
The following properties hold for the extended 
propensities $\widetilde{\lambda}_j^\varepsilon(x)$:
\begin{enumerate}
\item They are non-negative, and lie in $C^3(\mathbb{R}^N)$. 
\item They are bounded, uniformly with respect to $\varepsilon$, 
and the same applies for their mixed derivatives up to order $3$.
\item For each $j$, $\widetilde{\lambda}_j^\varepsilon(x)$ is 
zero outside a bounded domain $\Omega^\varepsilon $ which 
contains $\varepsilon \, \Omega$.
\end{enumerate}
\end{assumption}
\begin{remark}
A $C^3(\mathbb{R}^N)$ extension of the propensities satisfying 
Assumption $\ref{ass:extension_assumptions}$ is always possible. 
Indeed, for each $j$, set $\widetilde{\lambda}_j$ to be zero in 
$\mathbb{R}^N\setminus \Omega^\varepsilon$.  Then one can extend 
the value of the propensities to $\Omega^\varepsilon$ by transfinite 
interpolation, see~\cite{gordon1971blending, barnhill1984smooth}.
\end{remark}

\begin{remark}
Such an extension may result in propensities which differ from 
the ``standard'' propensities typically used for the CLE. 
In particular, propensities of the form $k_1 x(x-1)$ must be modified 
so as to remain non-negative. Such an explicit construction 
of extended propensities for unimolecular and bimolecular reactions 
of a single species can be found 
in~\cite[Example 4.7-4.8]{karlsson2011towards}.
\end{remark}

\noindent
Using the extended propensities, one can  extend the Markov jump 
process $\mathbf{X}^\varepsilon(t)$ to take initial conditions 
$\mathbf{X}^\varepsilon(0) \in \Omega^\varepsilon$.  The 
infinitesimal generator of $\mathbf{X}^\varepsilon(t)$ is 
the natural extension of (\ref{eq:generator_epsilon}), 
also denoted by $\mathcal{L}^\varepsilon$ defined by
$$
{\mathcal{L}}^\varepsilon f(\mathbf{x}) 
= 
\frac{1}{\varepsilon}\sum_{r=1}^{R} 
\widetilde{\lambda}^\varepsilon_r(\mathbf{x})
\left[f(\mathbf{x} + \varepsilon \boldsymbol{\nu}_r) - f(\mathbf{x})\right],	
\quad \text{for all} \ f \in C_0(\mathbb{R}^N).
$$
For a fixed observable $g \in C^3(\mathbb{R}^N)$, 
define the value function 
$u^\varepsilon:[0,T]\times\mathbb{R}^N \rightarrow \mathbb{R}$:
$$
u^\varepsilon(t, \mathbf{x}) 
= 
\mathbb{E}\left[g(\mathbf{X}^\varepsilon(t)) \, | 
\, \mathbf{X}^\varepsilon(0) = \mathbf{x}\right].
$$
Then $u^\varepsilon(t,\mathbf{x})$ can be expressed as 
the unique solution of the Backward Kolmorogov 
equation~\cite{Gardiner:1985:HSM}
\begin{equation}
\label{eq:rescaled_kbe_u}
\begin{aligned}
\partial_t u^\varepsilon(t, \mathbf{x}) 
&= 
{\mathcal{L}}^\varepsilon u^\varepsilon(t,\mathbf{x}),	
\quad \mbox{ for } (t, \mathbf{x}) \in (0, T)\times \mathbb{R}^N,
\\ 
u^\varepsilon(0, \mathbf{x}) 
&= g(\mathbf{x}),
\quad \mbox{ for }  \mathbf{x} \in \mathbb{R}^N.
\end{aligned}
\end{equation}
For any fixed $\mathbf{x} \in \mathbb{R}^N$, equation
(\ref{eq:rescaled_kbe_u}) can be viewed as an infinite 
linear ODE on the translated lattice 
$\mathbf{x} + \varepsilon \mathbb{\mathbb{L}}$.  
By Assumption \ref{ass:extension_assumptions} the propensity 
is only non-zero for finitely many terms, thus the dynamics is 
characterised by a finite linear system of ODEs.  Existence and 
uniqueness of a solution $u^\varepsilon(t,\mathbf{x})$ in $C^1[0,T]$ 
follows immediately.  Moreover, if $g(\textbf{x})$ is locally bounded, 
then so is $u^\varepsilon(t,\mathbf{x})$. Clearly, for $\mathbf{x} \in \mathbb{R}^N$ such that 
$\widetilde{\lambda}^\varepsilon_r(\mathbf{x}) = 0$ 
for all $r$ we have $u^\varepsilon(t, \mathbf{x}) = g(\mathbf{x})$, 
for all $t \in [0,T]$.

Moreover, using Assumption~\ref{ass:extension_assumptions}, 
it follows that  $u^\varepsilon(t, \cdot) \in C^3(\mathbb{R}^N)$, 
such that the mixed derivatives can be expressed as the unique 
solutions of the following equations, where $\partial_{i}$, 
$\partial_{ij}$ and $\partial_{ijk}$ denote first, second 
and third spatial derivatives with respect to the variables 
$x_i, x_j, x_k$, 
$\forall i, j, k \in \lbrace 1, 2,\ldots, N\rbrace$.
\begin{equation}
\label{eq:D1}
\partial_t  \partial_i u^\varepsilon(t,\mathbf{x}) 
- {\mathcal{L}}^\varepsilon \partial_{i} u^\varepsilon(t, \mathbf{x}) 
= 
\frac{1}{\varepsilon}\sum_{r=1}^{R} \partial_{i} \widetilde{\lambda}^\varepsilon_r(\mathbf{x}) \left[u^\varepsilon(t, \mathbf{x} + \varepsilon \boldsymbol{\nu}_r) - u^\varepsilon(t, \mathbf{x})\right],
\end{equation}
\begin{align}
\notag\partial_t \partial_{ij} u^\varepsilon(t,\mathbf{x}) 
&- {\mathcal{L}}^\varepsilon \partial_{ij} 
u^\varepsilon(t, \mathbf{x}) 
= \frac{1}{\varepsilon}\sum_{r=1}^{R} \partial_{ij} \widetilde{\lambda}^\varepsilon_r(\mathbf{x}) 
\left[u^\varepsilon(t, \mathbf{x} + \varepsilon \boldsymbol{\nu}_r) 
- u^\varepsilon(t, \mathbf{x})\right] \\
\notag		
&\quad + \frac{1}{\varepsilon}\sum_{a,b\in\Pi_2}\sum_{r=1}^{R} 
\partial_a \widetilde{\lambda}^\varepsilon_r(\mathbf{x}) 
\left[\partial_b u^\varepsilon(t, \mathbf{x} 
+ \varepsilon \boldsymbol{\nu}_r) - \partial_b u^\varepsilon(t, \mathbf{x})\right],
\end{align}
\begin{align}
\notag\partial_t \partial_{ijk} u^\varepsilon(t,\mathbf{x}) 
&- {\mathcal{L}}^\varepsilon \partial_{ijk} u^\varepsilon(t, \mathbf{x}) 
= 
\frac{1}{\varepsilon}
\sum_{r=1}^{R} \partial_{ijk} 
\widetilde{\lambda}^\varepsilon_r(\mathbf{x}) 
\left[u^\varepsilon(t, \mathbf{x} + \varepsilon \boldsymbol{\nu}_r) 
-  u^\varepsilon(t, \mathbf{x})\right]
\\
\notag &\quad + 
\frac{1}{2\varepsilon}
\sum_{a,b,c\in\Pi_3}\sum_{r=1}^{R} \partial_{ab} \widetilde{\lambda}^\varepsilon_r(\mathbf{x}) 
\left[
\partial_c u^\varepsilon(t, \mathbf{x} 
+ \varepsilon \boldsymbol{\nu}_r) 
- \partial_c u^\varepsilon(t, \mathbf{x})\right]
\\
\notag &\quad + \frac{1}{2\varepsilon}
\sum_{a,b,c\in\Pi_3}\sum_{r=1}^{R} \partial_a \widetilde{\lambda}^\varepsilon_r(\mathbf{x}) 
\left[\partial_{bc} u^\varepsilon(t, \mathbf{x} 
+ \varepsilon \boldsymbol{\nu}_r) - \partial_{bc} 
u^\varepsilon(t, \mathbf{x})\right],
\end{align}
where $\Pi_2$ and $\Pi_3$ denote the set of 
permutations of $\lbrace i, j \rbrace$ and 
$\lbrace i, j, k \rbrace$, respectively.

\begin{lemma}
Given $p, q, r = 1, 2,\ldots, R$, and $t \in [0,T]$ define 
the following scalar quantities
\begin{equation*}
\begin{aligned}
A_r^\varepsilon(t, \mathbf{x}) 
&= \boldsymbol{\nu}_r\cdot\nabla u^\varepsilon(t, \mathbf{x}) 
= \sum_{i=1}^{N} \nu_{r, i} 
\partial_{i} \, u^\varepsilon(t, \mathbf{x}), 
\\
B_{p, q}^\varepsilon(t, \mathbf{x}) 
&= (\boldsymbol{\nu}_p\otimes\boldsymbol{\nu}_q) 
:
\nabla \nabla u^\varepsilon(t, \mathbf{x}) 
= \sum_{i,j=1}^{N} \nu_{p, i} \, \nu_{q,j} \, 
\partial_{ij} u^\varepsilon(t, \mathbf{x}),
\\
C_{p, q, r}^\varepsilon(t, \mathbf{x}) 
&= 
(\boldsymbol{\nu}_p\otimes\boldsymbol{\nu}_q\otimes\boldsymbol{\nu}_r) 
: 
\nabla \nabla\nabla u^\varepsilon(t, \mathbf{x}) 
= 
\sum_{i,j,k=1}^{N} 
\nu_{p, i} \, \nu_{q,j} \, \nu_{r,k} \, \partial_{ijk} 
u^\varepsilon(t, \mathbf{x}).
\end{aligned}
\end{equation*}
Then, there exists constants $K_1$, $K_2$ and $K_3$ and 
$C_1$, $C_2$, and $C_3$ independent of $\varepsilon$ 
such that
\begin{equation}
\label{eq:uniform_bound}
\lVert{ A_r^\varepsilon(t, \cdot)}\rVert_{\infty} \leq C_1 e^{K_1 T}\!,
\quad
\lVert{ B_{p, q}^\varepsilon(t, \cdot)}\rVert_{\infty} \leq C_2 e^{K_2 T}\!,
\quad
\lVert{ C_{p, q, r}^\varepsilon(t, \cdot)}\rVert_{\infty} \leq C_3 e^{K_3 T}\!,
\end{equation}
for $t \in [0,T]$, where the $K_1$, $K_2$ and $K_3$ 
depend on  
$$
\left\lVert\widetilde{\lambda}^\varepsilon_r\right\rVert_{C^1(\mathbb{R}^N)},
\quad
\left\lVert\widetilde{\lambda}^\varepsilon_r\right\rVert_{C^2(\mathbb{R}^N)}
\quad \mbox{and} \quad
\left\lVert\widetilde{\lambda}^\varepsilon_r\right\rVert_{C^3(\mathbb{R}^N)},
$$
respectively, which by Assumption $(\ref{ass:extension_assumptions})$ 
are bounded independently of $\varepsilon$. 
\end{lemma}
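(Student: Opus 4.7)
The plan is to bound each of $\|\nabla u^\varepsilon(t,\cdot)\|_\infty$, $\|\nabla\nabla u^\varepsilon(t,\cdot)\|_\infty$ and $\|\nabla\nabla\nabla u^\varepsilon(t,\cdot)\|_\infty$ in turn by applying Duhamel's variation-of-constants formula to the evolution equations satisfied by their components (equation (\ref{eq:D1}) and its higher-order analogues), and then closing the resulting integral inequality with Gr\"onwall's lemma. Since $A_r^\varepsilon$, $B_{p,q}^\varepsilon$ and $C_{p,q,r}^\varepsilon$ are merely contractions of these derivative tensors against $O(1)$ stoichiometric vectors, the estimates (\ref{eq:uniform_bound}) follow immediately once the sup norms of the derivatives themselves are controlled. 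The two essential ingredients are: (i) the semigroup $e^{t\mathcal{L}^\varepsilon}$ is a contraction on $L^\infty(\mathbb{R}^N)$, which follows from the probabilistic representation $e^{t\mathcal{L}^\varepsilon}f(\mathbf{x}) = \mathbb{E}[f(\mathbf{X}^\varepsilon(t))\,|\,\mathbf{X}^\varepsilon(0)=\mathbf{x}]$; and (ii) the seemingly dangerous $\varepsilon^{-1}$ prefactors on the right-hand sides cancel against finite differences of the form $u^\varepsilon(\mathbf{x}+\varepsilon\boldsymbol{\nu}_r)-u^\varepsilon(\mathbf{x})$, which by the mean value theorem are $O(\varepsilon)\|\nabla u^\varepsilon\|_\infty$, and analogously for differences of derivatives.

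\textbf{First derivative.} Applying Duhamel to (\ref{eq:D1}) gives $\partial_i u^\varepsilon(t,\cdot) = e^{t\mathcal{L}^\varepsilon}\partial_i g + \int_0^t e^{(t-s)\mathcal{L}^\varepsilon} H_i^\varepsilon(s,\cdot)\,\mathrm{d}s$, where $H_i^\varepsilon(s,\mathbf{x}) = \varepsilon^{-1}\sum_{r} \partial_i\widetilde{\lambda}_r^\varepsilon(\mathbf{x})[u^\varepsilon(s,\mathbf{x}+\varepsilon\boldsymbol{\nu}_r)-u^\varepsilon(s,\mathbf{x})]$. The mean value theorem bounds this source pointwise by $K_1\|\nabla u^\varepsilon(s,\cdot)\|_\infty$, with $K_1$ depending only on $\sum_r\|\partial_i\widetilde{\lambda}_r^\varepsilon\|_\infty$ and $|\boldsymbol{\nu}_r|$, hence uniformly in $\varepsilon$ by Assumption~\ref{ass:extension_assumptions}. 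Combined with (i), this gives $\|\partial_i u^\varepsilon(t,\cdot)\|_\infty \leq \|\partial_i g\|_\infty + K_1\int_0^t\|\nabla u^\varepsilon(s,\cdot)\|_\infty\,\mathrm{d}s$; summing over $i$ and invoking Gr\"onwall yields the exponential bound, hence the estimate on $A_r^\varepsilon$.

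\textbf{Higher derivatives and main obstacle.} The bounds on $B_{p,q}^\varepsilon$ and $C_{p,q,r}^\varepsilon$ follow by the same Duhamel/Gr\"onwall scheme applied to the equations for $\partial_{ij}u^\varepsilon$ and $\partial_{ijk}u^\varepsilon$ respectively. At each order the right-hand side splits into two families: finite differences of \emph{lower-order} derivatives of $u^\varepsilon$, which are bounded pointwise by quantities already controlled at the previous step and therefore contribute a bounded (time-dependent but finite) inhomogeneity, with constants now also depending on $\|\widetilde{\lambda}_r^\varepsilon\|_{C^k}$, $k=2,3$; and finite differences of the \emph{top-order} derivative itself, which produce precisely the self-referential integrand required for Gr\"onwall's inequality. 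The main obstacle is really ingredient (i): establishing $L^\infty$-contractivity of $e^{t\mathcal{L}^\varepsilon}$ in a setting where the coefficients have been compactly truncated and the generator lacks any ellipticity, so that standard parabolic PDE arguments do not apply. This is handled cleanly by the probabilistic viewpoint already used in the setup of the excerpt: equation (\ref{eq:rescaled_kbe_u}) is, on each translated lattice $\mathbf{x}+\varepsilon\mathbb{L}$, a finite-state linear ODE system whose transition semigroup is sub-stochastic and therefore automatically a contraction on $L^\infty$.
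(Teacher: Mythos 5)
Your proposal is correct and follows essentially the same route as the paper: Duhamel's formula for the derivative equations, the $L^\infty$-contraction of the Markov semigroup $P_t^\varepsilon$ obtained from its probabilistic representation, the mean value theorem to absorb the $\varepsilon^{-1}$ prefactor into a finite difference of one lower order, and Gr\"onwall at each order. The only cosmetic difference is that you bound the full derivative tensors componentwise and then contract against the stoichiometric vectors, whereas the paper closes the Gr\"onwall argument directly in the contracted variables $A_r^\varepsilon$, $B_{p,q}^\varepsilon$, $C_{p,q,r}^\varepsilon$ (which arise naturally since the finite differences are taken in the directions $\boldsymbol{\nu}_r$); both versions work.
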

\begin{proof}
Using (\ref{eq:D1}), we have
$$
\partial_t A_q^\varepsilon(t, \mathbf{x}) - {\mathcal{L}}^\varepsilon A_q^\varepsilon(t,\mathbf{x}) = F_1^\varepsilon(t, \mathbf{x})
$$
where
\begin{equation}
F_1^\varepsilon(t, \mathbf{x}) =
\frac{1}{\varepsilon} \sum_{r=1}^{R} \boldsymbol{\nu}_q
\cdot
\nabla\widetilde{\lambda}_r^\varepsilon(\mathbf{x}) 
\left[u^\varepsilon(t, \mathbf{x} + \varepsilon \boldsymbol{\nu}_r) 
- u^\varepsilon(t, \mathbf{x})\right].
\label{F1def}
\end{equation}
Let $P^\varepsilon_t$ be the semigroup associated with
$\mathbf{X}^\varepsilon(t),$ so that 
$$
P^\varepsilon_tf(\mathbf{x}) 
= 
\mathbb{E}\left[
f(\mathbf{X}^\varepsilon(t)) \, | {X}^\varepsilon(0) = \mathbf{x}
\right].
$$  Clearly, if $f$ is bounded, then 
$|P^\varepsilon_t f(\mathbf{x}) | \leq \lVert f \rVert_{\infty}.$ 
In particular, $|u^\varepsilon(t, \mathbf{x})|$ is bounded uniformly 
with respect to $\varepsilon$. It is straightforward to check 
that we can write the solution $A_q^\varepsilon(t, \mathbf{x})$ 
as 
$$
A_q^\varepsilon(t, \mathbf{x}) 
= 
P_t 
\left[
\boldsymbol{\nu}_q\cdot\nabla g\right]
(\mathbf{x}) 
+ 
\int_0^t P_{t-s} \, F_1^\varepsilon(s, \mathbf{x})\,\mbd s,
$$
where $g$ is the fixed observable used as the initial condition in
(\ref{eq:rescaled_kbe_u}). Thus,
\begin{equation}
|A_q^\varepsilon(t, \mathbf{x})| 
\leq 
\lVert\boldsymbol{\nu}_q\cdot\nabla g\rVert_{\infty} 
+ \int_0^t \lVert F_1^\varepsilon(s, \cdot) \rVert_{\infty}\,\mbd s.
\label{Aqeest}
\end{equation}
Since $u^\varepsilon(t, \cdot)$ is $C^1$, we have 
$$
\frac{1}{\varepsilon}\left[
u^{\varepsilon}(t, \mathbf{x} + \varepsilon\boldsymbol{\nu}_r) 
- u^{\varepsilon}(t, \mathbf{x})\right] 
=
\int_0^1 A_r^\varepsilon(t, \mathbf{x} 
+ w \, \varepsilon \, \boldsymbol{\nu}_r)
w\,\mbox{d}w.
$$
Substituting into (\ref{F1def}), we obtain
$$
\lVert F_1^\varepsilon(t, \cdot)\rVert_{\infty} 
\leq 
C\sum_{r=1}^R\lVert \widetilde{\lambda}^\varepsilon_r\rVert_{C^1} 
\lVert A_r^\varepsilon(t, \cdot)\rVert_{\infty}\,\mbd s.
$$
Therefore, using (\ref{Aqeest}), we get
$$
\max_{q}\lVert A_q^\varepsilon(t,\cdot)\rVert_{\infty} \leq \max_{q}\lVert A_q^\varepsilon(0,\cdot)\rVert_{\infty} + K_1 \int_0^T \max_{q}\lVert A_q^\varepsilon(s,\cdot)\rVert_{\infty}\,\mbd s,
$$
so that
$$\max_{q}\lVert A_q^\varepsilon(t,\cdot)\rVert_{\infty} \leq C_1e^{K_1 T}.$$
Similarly,
\begin{align*}
\partial_t B_{p,q}^\varepsilon(t, \mathbf{x}) 
- {\mathcal{L}}^\varepsilon B_{p,q}^\varepsilon(t,\mathbf{x}) 
&= F^\varepsilon_2(t, \mathbf{x}),
\end{align*}
where
\begin{align*}
F^\varepsilon_2(t, \mathbf{x}) 
&= \frac{1}{\varepsilon} 
\sum_{r=1}^{R} 
(\boldsymbol{\nu}_p \otimes \boldsymbol{\nu}_q) :
\nabla\nabla\widetilde{\lambda}_r^\varepsilon(\mathbf{x}) 
\left[u^\varepsilon(t, \mathbf{x} + \varepsilon \boldsymbol{\nu}_r) 
- u^\varepsilon(t, \mathbf{x})\right]
\\
& + \frac{1}{\varepsilon} 
\sum_{r=1}^{R} \boldsymbol{\nu}_q
\cdot \nabla\widetilde{\lambda}_r^\varepsilon(\mathbf{x})
\left[
A_p^\varepsilon(t, \mathbf{x} + \varepsilon \boldsymbol{\nu}_r) 
- A_p^\varepsilon(t, \mathbf{x})
\right]
\\
& 
+ \frac{1}{\varepsilon} \sum_{r=1}^{R} 
\boldsymbol{\nu}_p
\cdot 
\nabla\widetilde{\lambda}_r^\varepsilon(\mathbf{x}) 
\left[
A_q^\varepsilon(t, \mathbf{x} + \varepsilon \boldsymbol{\nu}_r) 
- A_q^\varepsilon(t, \mathbf{x})\right].
\end{align*}
Thus it follows that
$$
B_{p,q}^\varepsilon(t, \mathbf{x}) 
= 
P_t 
\left[
(\boldsymbol{\nu}_p \otimes \boldsymbol{\nu}_q)
:\nabla\nabla g\right](\mathbf{x}) 
+ \int_0^t P_{t-s} \, F_2^\varepsilon(s, \mathbf{x})\,\mbd s,
$$
where 
$$
\lVert F_2^\varepsilon(t, \cdot)\rVert_{\infty} 
\leq 
C\sum_{r=1}^R
\lVert \widetilde{\lambda}^\varepsilon_r\rVert_{C^2(\mathbb{R}^N)}
\left(||A_r(t, \cdot)||_{\infty}  + ||B_{r,q}(t, \cdot)||_{\infty} 
+ ||B_{r,p}(t, \cdot)||_{\infty} \right).
$$
It follows that
\begin{align*}
\max_{p, q}\left\lVert B_{p, q}^\varepsilon(t, \cdot)\right\rVert_{\infty} 
\leq 
&\max_{p, q}\left\lVert B_{p, q}^\varepsilon(0, \cdot)\right\rVert_{\infty}  
\\ 
&+ K_2\int_0^T 
\left(
\max_{p, q}\left\lVert B_{p, q}^\varepsilon(s, \cdot)\right\rVert_{\infty} 
+ \max_{p}\left\lVert A_p(s, \cdot)\right\rVert_{\infty} \right) \,\mbd s,
\end{align*}
which implies the second inequality in (\ref{eq:uniform_bound}) by 
Gronwall's inequality. The proof of the third inequality in
(\ref{eq:uniform_bound}) follows analogously.
\end{proof}

\subsection{The hybrid scheme in the large-volume limit}

\noindent
Having extended the propensity function to take arbitrary 
values in $\mathbb{R}^N$ we can now study the weak error 
that arises from taking the hybrid approximation in the 
large-volume limit. 

\begin{assumption} 
\label{asblend}
We assume that the blending 
functions $\beta_r$, $r=1,2,\ldots, R$ 
satisfy $\beta_r \in C^0(\mathbb{R}^N)$ and
$\beta_r(\mathbf{x}) \in [0,1]$,
for all $\mathbf{x} \in \mathbb{R}^N$.
\end{assumption}

\noindent
Given the extended propensities, we can apply the same 
large-volume rescaling to the hybrid process 
(\ref{eq:hybrid}) to obtain a jump-diffusion 
$\mathbf{Z}^\varepsilon(t)$ given by
\begin{align*}
\mathbf{Z}^\varepsilon(t) 
= 
\mathbf{Z}^\varepsilon(0) 
\,+ 
&\sum_{r=1}^R 
P_{r}
\left(
\int_0^t \beta_r(\mathbf{Z}^\varepsilon(s))
\,
\widetilde{\lambda}^\varepsilon_r(\llbracket \mathbf{Z}^\varepsilon(s) \rrbracket_{\varepsilon})\,\mbd s\right)
\boldsymbol{\nu}_r
\\
+&
\sum_{r=1}^R 
\boldsymbol{\nu}_r 
\int_0^t (1 - \beta_r(\mathbf{Z}^\varepsilon(s))
\, \widetilde{\lambda}_r^\varepsilon(\mathbf{Z}^\varepsilon(s))\,\mbd s \\ 
+ 
&\sum_{r=1}^R 
\sqrt{\varepsilon} \, \boldsymbol{\nu}_r
\int_0^t 
\sqrt{\displaystyle 
(1 - \beta_r(\mathbf{Z}^\varepsilon(s))
\, \widetilde{\lambda}_r^\varepsilon(\mathbf{Z}^\varepsilon(s))}\, 
\mbd W_r(s),
\end{align*}
where $\lbrace W_r \rbrace_{r=1}^{R}$ and 
$\lbrace P_r\rbrace_{r=1}^R$ 
are standard Wiener and Poisson processes, respectively, all 
mutually independent, and 
$\llbracket \mathbf{x} \rrbracket_{\varepsilon}$ is the closest 
in $\varepsilon\mathbb{L}$ to $\mathbf{x}$, or equivalently 
$\llbracket \mathbf{x} \rrbracket_{\varepsilon} 
= \varepsilon \llbracket \mathbf{x}/\varepsilon \rrbracket$. 
The generator of this process is given by
\begin{align*}
\mathcal{G}^\varepsilon f(\mathbf{x}) 
\,=& 
\frac{1}{\varepsilon}
\sum_{r=1}^R \beta_r(\mathbf{x}) \,
\widetilde{\lambda}_r^\varepsilon
(\llbracket \mathbf{x} \rrbracket_{\varepsilon})
\left[f(\mathbf{x}+\varepsilon\boldsymbol{\nu}_r) - f(\mathbf{x})\right] 
\\
+& \sum_{r=1}^R \left(
1 - \beta_r(\mathbf{x}) \right) \,
\widetilde{\lambda}_r^\varepsilon(\mathbf{x}) 
\, \boldsymbol{\nu}_r\cdot\nabla f(\mathbf{x}) 
\\ +& 
\frac{\varepsilon}{2}
\sum_{r=1}^R \left(
1 - \beta_r(\mathbf{x})
\right)
\, 
\widetilde{\lambda}_r^\varepsilon
(\mathbf{x}) \, (\boldsymbol{\nu}_r\otimes\boldsymbol{\nu}_r)
:\nabla\nabla f(\mathbf{x}).
\end{align*}
We now obtain a weak error estimate between the processes
$\mathbf{X}^\varepsilon(t)$ and $\mathbf{Z}^\varepsilon(t)$ 
in the large volume limit as $\varepsilon \rightarrow 0$. 

\begin{proposition}
Let blending functions satisfy Assumption~$\ref{asblend}$.
Let $g \in C^3(\mathbb{R}^N)$, then there exists a 
constant $C > 0$, independent of $\varepsilon$, such 
that
\begin{equation}
\label{eq:weak_error}
\Big\lvert
\mathbb{E}_{\mathbf{x}}\left[{g(\mathbf{X}^\varepsilon(t))} \right] 
- 
\mathbb{E}_{\mathbf{x}}\left[{g(\mathbf{Z}^\varepsilon(t))} \right]
\Big\rvert 
\leq C \varepsilon^2, \quad \mbox{ for } t \in [0,T],
\end{equation}
where $\mathbf{X}^\varepsilon(0) 
= \mathbf{Z}^\varepsilon(0) 
= \mathbf{x} \in \varepsilon \, \Omega_0$.
\end{proposition}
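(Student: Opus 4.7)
The plan is to carry out a standard weak error analysis via the backward Kolmogorov equation and a Dynkin-type identity, using the generator comparison between $\mathcal{L}^\varepsilon$ and $\mathcal{G}^\varepsilon$ together with the uniform derivative bounds established in the preceding lemma.

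First, I would recall that $u^\varepsilon(t,\mathbf{x}) = \mathbb{E}_\mathbf{x}[g(\mathbf{X}^\varepsilon(t))]$ solves the backward Kolmogorov equation $\partial_t u^\varepsilon = \mathcal{L}^\varepsilon u^\varepsilon$ with $u^\varepsilon(0,\cdot) = g$. Applying It\^o's formula for jump-diffusions to the process $s \mapsto u^\varepsilon(t-s,\mathbf{Z}^\varepsilon(s))$, the drift term contributes $(-\partial_t + \mathcal{G}^\varepsilon) u^\varepsilon$, while the Poisson and Brownian components yield local martingales whose expectations vanish (using boundedness of $u^\varepsilon$ and the uniform derivative bounds to justify this). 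Taking expectations and rearranging produces the identity
\begin{equation*}
\mathbb{E}_\mathbf{x}[g(\mathbf{X}^\varepsilon(t))] - \mathbb{E}_\mathbf{x}[g(\mathbf{Z}^\varepsilon(t))] = \int_0^t \mathbb{E}_\mathbf{x}\bigl[(\mathcal{L}^\varepsilon - \mathcal{G}^\varepsilon) u^\varepsilon(t-s,\mathbf{Z}^\varepsilon(s))\bigr]\,\mbd s,
\end{equation*}
so the task reduces to an $L^\infty$ bound on $(\mathcal{L}^\varepsilon - \mathcal{G}^\varepsilon) u^\varepsilon$.

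Next, I would decompose the difference of generators by writing $\widetilde{\lambda}_r^\varepsilon(\mathbf{x}) = \beta_r(\mathbf{x})\widetilde{\lambda}_r^\varepsilon(\mathbf{x}) + (1-\beta_r(\mathbf{x}))\widetilde{\lambda}_r^\varepsilon(\mathbf{x})$ inside $\mathcal{L}^\varepsilon$, so that
\begin{equation*}
(\mathcal{L}^\varepsilon - \mathcal{G}^\varepsilon) f(\mathbf{x}) = \underbrace{\frac{1}{\varepsilon}\sum_r \beta_r(\mathbf{x})\bigl[\widetilde{\lambda}_r^\varepsilon(\mathbf{x}) - \widetilde{\lambda}_r^\varepsilon(\llbracket\mathbf{x}\rrbracket_\varepsilon)\bigr]\bigl[f(\mathbf{x}+\varepsilon\boldsymbol{\nu}_r) - f(\mathbf{x})\bigr]}_{(I)}+ \underbrace{\sum_r (1-\beta_r(\mathbf{x}))\widetilde{\lambda}_r^\varepsilon(\mathbf{x}) R_r^\varepsilon f(\mathbf{x})}_{(II)},
\end{equation*}
where $R_r^\varepsilon f(\mathbf{x}) := \varepsilon^{-1}[f(\mathbf{x}+\varepsilon\boldsymbol{\nu}_r)-f(\mathbf{x})] - \boldsymbol{\nu}_r\cdot\nabla f(\mathbf{x}) - \tfrac{\varepsilon}{2}(\boldsymbol{\nu}_r\otimes\boldsymbol{\nu}_r):\nabla\nabla f(\mathbf{x})$. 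Term (II) is handled by a third-order Taylor expansion, producing a remainder of the form $\varepsilon^2 \int_0^1 (1-w)^2 C^\varepsilon_{r,r,r}(\cdot,\mathbf{x}+w\varepsilon\boldsymbol{\nu}_r)\,dw / 6$, which by the uniform $C^3$ bound on $u^\varepsilon$ is $O(\varepsilon^2)$. For term (I), I would Taylor-expand both the propensity difference around $\llbracket\mathbf{x}\rrbracket_\varepsilon$ and the finite difference of $f$, using that $|\mathbf{x} - \llbracket\mathbf{x}\rrbracket_\varepsilon| \le \varepsilon\sqrt{N}/2$ together with the uniform $C^2$ bounds on $\widetilde{\lambda}_r^\varepsilon$ and the uniform bounds on $A_r^\varepsilon$ and $B_{p,q}^\varepsilon$ from the preceding lemma, so that the contribution is controlled by $\varepsilon^2$ times a constant depending on $T$.

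Substituting these bounds into the Dynkin identity and integrating over $[0,T]$ yields $|\mathbb{E}_\mathbf{x}[g(\mathbf{X}^\varepsilon(t))] - \mathbb{E}_\mathbf{x}[g(\mathbf{Z}^\varepsilon(t))]| \le C\varepsilon^2$ with $C$ depending on $T$, on the $C^3$ norms of $\widetilde{\lambda}_r^\varepsilon$ (bounded uniformly in $\varepsilon$ by Assumption \ref{ass:extension_assumptions}), on the jump vectors $\boldsymbol{\nu}_r$, and on $\|g\|_{C^3}$. The main technical hurdle is term (I): the naive product of a $C^1$-estimate on the propensity difference with a first-order expansion of $f$ only yields $O(\varepsilon)$, so to obtain the claimed $O(\varepsilon^2)$ one must carry the Taylor expansion to one further order and exploit the uniform bound on the second mixed derivative $B_{p,q}^\varepsilon$ together with the smoothness of $\widetilde{\lambda}_r^\varepsilon$, effectively absorbing the lattice projection discrepancy into a second-order remainder.
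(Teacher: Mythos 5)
Your overall architecture --- backward Kolmogorov equation for $u^\varepsilon$, a Dynkin/Duhamel identity reducing the weak error to the time-integrated local error $(\mathcal{L}^\varepsilon-\mathcal{G}^\varepsilon)u^\varepsilon$ propagated by the semigroup of $\mathbf{Z}^\varepsilon$, and a third-order Taylor expansion controlled by the uniform bounds on $A_r^\varepsilon$, $B_{p,q}^\varepsilon$, $C_{p,q,r}^\varepsilon$ from the preceding lemma --- is exactly the paper's argument, and your term (II) is precisely the paper's remainder $c^\varepsilon(t,\mathbf{x})$. The gap is in your treatment of term (I). Its leading contribution at an off-lattice point is
\begin{equation*}
\sum_{r=1}^R \beta_r(\mathbf{x})\,\big(\nabla\widetilde{\lambda}_r^\varepsilon(\mathbf{x})\cdot(\mathbf{x}-\llbracket\mathbf{x}\rrbracket_\varepsilon)\big)\,\big(\boldsymbol{\nu}_r\cdot\nabla u^\varepsilon(t,\mathbf{x})\big) + O(\varepsilon^2),
\end{equation*}
and since $\mathbf{x}-\llbracket\mathbf{x}\rrbracket_\varepsilon$ is of size $O(\varepsilon)$ but otherwise arbitrary --- it is a rounding error that neither vanishes to higher order nor averages out pointwise --- this expression is genuinely $O(\varepsilon)$ wherever $\beta_r\neq 0$ and $\nabla\widetilde{\lambda}_r^\varepsilon\neq 0$. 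Carrying the Taylor expansion ``one further order,'' as you propose, only refines the $O(\varepsilon^2)$ remainder; it cannot remove this first-order term, and there is no cancellation to exploit from the bound on $B_{p,q}^\varepsilon$. As written, your estimate of (I) yields only an $O(\varepsilon)$ local error in the blending region, hence only an $O(\varepsilon)$ weak error, not the claimed $O(\varepsilon^2)$.

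The paper does not estimate this term at all: it computes the local error using the identity $\llbracket\mathbf{x}\rrbracket_\varepsilon=\mathbf{x}$, valid for $\mathbf{x}\in\varepsilon\,\Omega_0$, so that the jump propensities appearing in $\mathcal{L}^\varepsilon$ and $\mathcal{G}^\varepsilon$ coincide and term (I) vanishes identically, leaving only the third-order Taylor remainder (your term (II)). You have in fact put your finger on a genuine subtlety --- the Duhamel/Dynkin identity requires the local error along the whole trajectory of $\mathbf{Z}^\varepsilon$, which leaves the lattice through its diffusive part --- but the resolution you offer does not work. To recover the paper's conclusion you must either argue that the lattice-rounding discrepancy in the jump propensity is exactly zero at the relevant evaluation points (as the paper asserts), or find a mechanism other than Taylor expansion to suppress (I).
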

\begin{proof}
Let 
$u^\varepsilon(t, \mathbf{x}) 
= \mathbb{E}_\mathbf{x}\left[g(\mathbf{X}^\varepsilon(t))\right]$ 
and $v^\varepsilon(t, \mathbf{x}) 
= \mathbb{E}_\mathbf{x}\left[g(\mathbf{Z}^\varepsilon(t))\right]$.  
We wish to obtain a bound on 
$E^\varepsilon(t,\mathbf{x}) 
= u^\varepsilon(t, \mathbf{x}) - v^\varepsilon(t,\mathbf{x})$.  
Then taking the derivative with respect to $t$ and using
the fact that 
$\llbracket \mathbf{x} \rrbracket_{\varepsilon} = \mathbf{x}$
for all $\mathbf{x} \in \varepsilon \Omega_0$, we obtain
\begin{align*}
\partial_t E^\varepsilon(t, \mathbf{x}) 
&- 
\mathcal{G}^\varepsilon E^\varepsilon(t, \mathbf{x}) 
= 
\partial_t u^\varepsilon(t, \mathbf{x}) 
- \mathcal{G}^\varepsilon u^\varepsilon(t, \mathbf{x}) 
= 
\left(
\mathcal{L}^\varepsilon 
- 
\mathcal{G}^\varepsilon
\right) 
u^\varepsilon(t, \mathbf{x})
\\
&= 
\sum_{r=1}^{R} 
(1-\beta_r(\mathbf{x}))\,
\widetilde{\lambda}^\varepsilon_r(\mathbf{x})
\Big[
\frac{u^\varepsilon(t, \mathbf{x}+\varepsilon \boldsymbol{\nu}_r) 
- u^\varepsilon(t, \mathbf{x})}{\varepsilon} 
- \boldsymbol{\nu}_r\cdot\nabla u^\varepsilon(t, \mathbf{x}) 
\\
&\qquad - \frac{\varepsilon}{2}
\, (\boldsymbol{\nu}_r\otimes\boldsymbol{\nu}_r):
\nabla\nabla u^\varepsilon(t,\mathbf{x})\Big].
\end{align*} 
Since $u^\varepsilon(t, \cdot)$ is in $C^3(\mathbb{R}^N)$, 
we can apply Taylor's theorem up to the second order on 
$u^\varepsilon(t, \mathbf{x} + \varepsilon \boldsymbol{\nu}_r)$ 
to obtain
\begin{equation}
\partial_t E^\varepsilon(t, \mathbf{x}) 
- \mathcal{G}^\varepsilon E^\varepsilon(t, \mathbf{x})  
= 
c^\varepsilon(t, \mathbf{x}),
\label{eq:ereq}
\end{equation}
where
$$
c^\varepsilon(t, \mathbf{x})
=
\sum_{r=1}^R
\frac{\varepsilon^2}{6} \,
(1-\beta_r(\mathbf{x}))\,
\widetilde{\lambda}^\varepsilon_r(\mathbf{x})
\,
(\boldsymbol{\nu}_r\otimes\boldsymbol{\nu}_r\otimes\boldsymbol{\nu}_r)
:\nabla\nabla\nabla u^\varepsilon(t, \xi_r),
$$
for some $\xi_r$ lying on the line between $\mathbf{x}$ and 
$\mathbf{x} + \varepsilon \boldsymbol{\nu}_r$. 
From (\ref{eq:ereq}) and the fact 
that $E(0, \mathbf{x}) = 0$, it follows that
$$
E^\varepsilon(t, \mathbf{x}) 
= \int_0^t R_{t-s}^\varepsilon \,
c^\varepsilon(s, \mathbf{x})\,\mbd s,
$$
where $R_t^\varepsilon$ is the semigroup operator corresponding 
to $\mathbf{Z}^\varepsilon(t)$.  Applying the uniform bound
(\ref{eq:uniform_bound}) we thus have that
$$
\left\lvert E^\varepsilon(t, \mathbf{x})\right\rvert 
\leq C\varepsilon^2 
\int_0^t \sum_{r=1}^R
\lVert C_{r,r,r}^\varepsilon(s, \cdot)\rVert_{\infty}\,\mbd s 
\leq   C_1 \varepsilon ^2 T e^{K_1 T},	\quad 
\mbox{ for } t \in [0,T].
$$ 
\end{proof}

\begin{remark}
The remainder term $c^\varepsilon(t,\mathbf{x})$ in equation 
(\ref{eq:ereq}) characterises 
the local error at the point $\mathbf{x}$.  We note that 
it is non-zero only in regions where $\beta_r(\mathbf{x})\neq 1$.  
Intuitively we would expect this to imply that 
$\mathbf{Z}^\varepsilon(t)$ is a superior approximation to 
the standard CLE. However, the global error estimate we derived 
is too coarse to capture the distinction between the two 
diffusion approximations, and thus we have only shown that the 
two approximations are consistent: in that the hybrid scheme 
does no worse than the CLE in the large-volume limit.
\end{remark}

\section{Simulating the Hybrid Model}
\label{sec:algorithm}

\noindent
Equation (\ref{eq:hybrid}) provides a general framework 
for simulating chemical systems which can capture both 
the discrete and continuum nature of a biochemical system. 
Any numerical scheme which can generate realisations 
of a jump-diffusion process with inhomogeneous jump rates 
with deterministic jump sizes can be used to 
simulate (\ref{eq:hybrid}). For illustrative reasons 
we describe three different possible numerical schemes, 
the first based on the Gillespie SSA and the second based 
on the modified Next Reaction Method proposed 
in~\cite{anderson2007modified}, and discussed 
in Section \ref{sec:prelims}. Finally, we describe
an alternative scheme based on adaptive thinning which 
works well for systems with bounded blending regions. 
The main difference between the purely jump case (where 
these algorithms have been applied before) and 
\eqref{eq:hybrid} is the fact that in the blending 
region the propensity functions do not remain constant 
between two consecutive reactions.  For the sake 
of clarity, given the propensities 
$\lambda_1, \lambda_2,\ldots, \lambda_R$ and blending 
functions $\beta_1, \beta_2,\ldots, \beta_R$ define
\begin{equation}
\lambda_j'(\mathbf{x}) 
= \beta_j(\mathbf{x})\lambda_j(\llbracket \mathbf{x} \rrbracket),
\quad \mbox{ and } \quad \lambda_j''(\mathbf{x}) 
= (1 - \beta_j(\mathbf{x}))\lambda_j(\mathbf{x}).
\label{notsim}
\end{equation}
Pseudocodes of each approach are given as Algorithms \ref{alg:hybrid},
\ref{alg:hybrid1} and \ref{alg:hybrid3}. They all
have the same input, namely propensities 
$\lambda_1,\lambda_{2}, \ldots, \lambda_R$, 
blending functions $\beta_1,\beta_{2}, \ldots, \beta_R$, 
the stoichiometric matrix 
$\underline{\boldsymbol{\nu}} 
= (\boldsymbol{\nu_1}, \ldots, \boldsymbol{\nu}_R)$, 
the final time of simulation $T$, time steps for the 
CLE $\Delta t$ and $\delta t$ (here, $\delta t \ge \Delta t$)
and initial state $\mathbf{X}(0) \in \mathbb{N}^N$.

\renewcommand{\floatpagefraction}{.7}%

\begin{algorithm}
\SetAlgoLined
Set $t = 0$.
\\
\While{$t<T$}{
\uIf{$\max_{j} \beta_j(\mathbf{X}(t)) = 0$}
{
Simulate CLE \eqref{eq:cle} up to time $t+\delta t$.\\ 
Set $t = t + \delta t$.
}
\uElseIf{$\min_{j} \beta_j(\mathbf{X}(t)) = 1$}{
Compute $\lambda_0 = \sum_{j=1}^R \lambda_j'(\mathbf{X}(t))$.\\
Sample $\tau \sim -\log(u)/\lambda_0$ where $u \sim U[0,1]$.\\
Choose the next reaction $r$ with probability
$\lambda_r'(\mathbf{X}(t))/\lambda_0$.\\
Set $\mathbf{X}(t+\tau) = \mathbf{X}(t) + \boldsymbol{\nu}_r$.\\
Set $t = t + \tau$.
}
\uElse{Compute $\lambda'_{0}=\sum_{j=1}^{R}\lambda'_j(\mathbf{X}(t))$.\\
Sample $\tau \sim -\log{u}/\lambda'_{0}$, where $u\sim U[0,1]$. \\
Choose the next reaction $r$ with probability
$\lambda'_r( \mathbf{X}(t))/\lambda'_{0}$. \\
\eIf{$\tau < \Delta t$}{Simulate  CLE \eqref{eq:cle1} 
up to time $t+\tau$ and set 
$\mathbf{X}(t + \tau) = \mathbf{X}(t+\tau) + \boldsymbol{\nu}_r$. \\
Set $t = t + \tau$.}
{Simulate CLE \eqref{eq:cle1} up to time $t+\Delta t$. \\
Set $t = t + \Delta t$.
}}}
\caption{{\it generating approximate realisations of  
hybrid model $(\ref{eq:hybrid})$.
\label{alg:hybrid}}} 
\end{algorithm}

\subsection{Hybrid simulations based on the Gillespie SSA}

\noindent
The steps to simulate the jump-diffusion process 
\eqref{eq:hybrid} based on an extension of the Gillespie 
SSA are described in Algorithm \ref{alg:hybrid}. As we 
can see in regions where $\beta_{r}(x)$ are $1$, the 
scheme reduces to the standard Gillespie SSA, and thus 
simulates the discrete dynamics exactly. Analogously, 
in regions where $\beta_{r}$ are all zero one can use 
a larger time-step $\delta t$ to evolve CLE \eqref{eq:cle} 
since it is not necessary to approximate the solutions 
of \eqref{eq:hybrid} in such regions, which can only be 
done with $\mathcal{O}(\Delta t)$ accuracy. 
In the intermediate regime for the continuous part 
of the dynamics the following CLE is used
\begin{equation}
\label{eq:cle1}
\mbd \mathbf{X}(t) 
= 
\sum_{j=1}^{R} \lambda_j''(\mathbf{X}(s)) \, \boldsymbol{\nu}_j\,\mbd t 
+ 
\boldsymbol{\nu}_j\sqrt{\lambda_j''(\mathbf{X}(s))}\,\mbd W_j(s).
\end{equation}
If during the CLE time step $[t,t+\Delta t)$ a discrete event 
is occurring at time $t+\tau$ we simulate the CLE up to that 
time and then add the discrete event.

\noindent
To simulate the diffusion part of the hybrid scheme we 
make use of the weak trapezoidal method described 
in~\cite{anderson2011weak}. Given the current 
state $\mathbf{X}^n$ we perform the following 
two steps to obtain $\mathbf{X}^{n+1}$: 
\begin{eqnarray*}
\mathbf{X}^* &=& 
\mathbf{X}^{n}  
+ \frac{\Delta t}{2}\sum_{j=1}^R\boldsymbol{\nu}_j
\lambda''_j\left(\mathbf{X}^{n}\right) 
+ \sqrt{\frac{\Delta t}{2}}\sum_{j=1}^R \boldsymbol{\nu}_j
\lambda''_j(\mathbf{X}^{n})\, \xi_j, \\
\mathbf{X}^{n+1} &=& 
\mathbf{X}^* 
+ 
\frac{\Delta t}{2}\sum_{j=1}^R h_j(\mathbf{X}^*,\mathbf{X}^{n})
+
\sqrt{
\displaystyle
\frac{\Delta t}{2}}\sum_{j=1}^R \sqrt{
\left[
\displaystyle
h_j(\mathbf{X}^*,\mathbf{X}^{n})
\right]^{+}} \boldsymbol{\nu}_j \,\xi'_j,
\end{eqnarray*}
where $h_j(\mathbf{X}^*,\mathbf{X}^{n})
=
2 \lambda''_j(\mathbf{X}^*) -  \lambda''_j (\mathbf{X}^{n}),$
$[a]^{+} = \max\left(0, a\right)$ and 
$\xi_1, \xi_2, \ldots, \xi_R$, 
$\xi'_1, \xi'_2, \ldots \xi'_R$ are mutually 
independent standard Gaussian random variables.

\subsection{Hybrid simulations based on the Next Reaction Method}

\noindent
A second algorithm, based on \cite{anderson2007modified}, for 
simulating \eqref{eq:hybrid} is described in Algorithm~\ref{alg:hybrid1}.
While it is entirely equivalent to the standard Next Reaction Method, 
the modified scheme keeps explicit track of the internal times $T_k$ 
and the next firing time $F_k$ of each Poisson process $P_k$ which 
simplifies integrating diffusion steps into the scheme. We note 
that the hybrid scheme described in \cite{ganguly2014jump} also 
employs a similar discretisation. Again in the presence of diffusion, 
it is no longer true that the propensity $\lambda'_j(\mathbf{X}(s))$ 
is constant from $t$ until the next reaction. Computing the next 
reaction time is equivalent to solving the following first passage 
time problem:
\begin{align*}
\mbox{Compute }& 
\inf\lbrace s \geq t \, : \, 
T_r(s) = F_r(t) - \log u  \mbox{ for some } 
r =1,2, \ldots, R\rbrace \mbox{ where:}\\
\mbd \mathbf{X}(s) 
&= \sum_{j=1}^{R}\lambda_j''(\mathbf{X}(s))\boldsymbol{\nu}_j\,\mbd t 
+ \boldsymbol{\nu}_j\sqrt{\lambda_j''(\mathbf{X}(s))}\,\mbd W_j(s)\\
\mbd T_r(s) &= \lambda'_r(\mathbf{X}(s))\,\mbd s, \quad r=1,2,\ldots, R.,
\end{align*}
In Algorithm~\ref{alg:hybrid1}, we use an Euler discretization 
of $T_r(s)$.

\begin{algorithm}
\SetAlgoLined
\BlankLine
Generate $R$ independent, $U[0,1]$ random numbers 
$u_j$, $j=1,2,\dots,R$.\\
Set $t = 0$. Set $F_j = -\log(u_j)$ and $T_j = 0$, for each 
$j=1,2,\dots,R$.\\
Compute the weighted propensities 
$\lambda'_j = \beta(\mathbf{X}(0))\lambda_j(\mathbf{X}(0))$.\\
\While{$t<T$}{
\eIf{$\max_{j}\beta_j(\mathbf{X}(t)) = 0$}
{Simulate CLE (\ref{eq:cle}) up to time $t + \delta t$.
Set $\tau = \delta t$.}
{
Set $\tau_j = (F_j - T_j)/\lambda'_j$ for $j=1,2,\ldots,R.$\\
Let $r = \mbox{argmin}_{j}\lbrace \tau_j \rbrace$ 
and set $\tau = \tau_{r}$. \\
\eIf{$\min_{j}\beta_j(\mathbf{X}(t)) = 1$}
{
Set $\mathbf{X}(t + \tau) = \mathbf{X}(t+\tau) 
+ \boldsymbol{\nu}_r$.\\
Sample $u \sim U[0,1]$ and set 
$F_r = F_r - \log(u)$.}
{
\eIf{$\tau < \Delta t$}{
Simulate CLE \eqref{eq:cle1} up to time $t+\tau$.\\  
Set $\mathbf{X}(t + \tau) 
= \mathbf{X}(t+\tau) + \boldsymbol{\nu}_r$.\\
Sample $u \sim U[0,1]$
and set $F_r = F_r - \log(u)$.
}{
Simulate CLE \eqref{eq:cle1} up to time $t+\Delta t$.
Set $\tau = \Delta t$.
}
}
Set $t = t + \tau$. Set $T_j = T_j + \lambda'_j \tau$
for $j=1,2,\dots,R$.\\
Update the  propensities 
$\lambda_j' 
= 
\beta_j(\mathbf{X}(t))\lambda_j(\llbracket \mathbf{X}(t)\rrbracket).$
}}
\caption{{\it generating approximate realisations of  
hybrid model $(\ref{eq:hybrid})$.
\label{alg:hybrid1}}}
\end{algorithm}

\subsection{Thinning Method}

\noindent
In the special case where one can bound the value of the weighted 
propensities $\lbrace \lambda_r'\rbrace_{r=1}^R$, it is possible 
to use a third method, based on standard thinning methods 
for sampling inhomogeneous Poisson processes, see for 
example~\cite{luc1986non,lewis1978simulation} and more 
recently~\cite{thanh2015simulation}. The main advantage 
of this scheme would be that it avoids the necessity to 
approximate the solution of the first passage time problem 
associated with the modified Next Reaction Method, 
thus being potentially more efficient. While one can find such 
bounds for many chemical systems, the added caveat is that 
the bounds must be known a priori, and choosing them too 
loosely will severely degrade the performance of the scheme.

To this end, we shall assume that there exist constants 
$\Lambda_1,\Lambda_2,\ldots, \Lambda_R$, where
\begin{equation}
\label{eq:thinning_boundedness_condition}
\lambda'_r(\mathbf{x}) \leq \Lambda_r,	\quad \mbox{ for all } \mathbf{x}.
\end{equation}
These constants will form the additional input of 
Algorithm~\ref{alg:hybrid3}. Suppose that 
$\beta_r(\mathbf{X}(t)) > 0$ for some $r$. 
To compute the next jump time of the $r$-th reaction, 
we sample from a dominating homogeneous process with 
rate $\Lambda_r$, so that the next jump time for 
the $r$-th reaction occurs at time $t + \tau_r$ where
$$
\tau_r = -\frac{\log(u)}{\Lambda_r},\quad u \sim U[0,1].
$$
Suppose that $t + \tau_{r}$ is the first jump occurring 
after the current time $t$. To determine whether the 
reaction $r$ will occur at time $t + \tau_r$, we 
sample $u' \sim U[0,1]$ and perform the reaction only if
\begin{equation}
\label{eq:thinning_condition}
\Lambda_r u' \leq \lambda'_r(\mathbf{X}(t + \tau_r)),
\end{equation}
where $\mathbf{X}(t + \tau_r)$ is the state of the 
process after simulating the Langevin dynamics 
from time $t$ to $t + \tau_r$. This thinning approach
can be integrated into the Gillespie SSA, demonstrated in  
Algorithm \ref{alg:hybrid3}.  At each timestep, three cases 
can occur. If $\min \beta_r(\mathbf{X}(t)) = 1$ (i.e. 
the process $\mathbf{X}(t)$ is a pure jump process) 
we use the standard Gillespie SSA.  
If $\max \beta_r(\mathbf{X}(t)) = 0$ (i.e. the process 
is purely diffusive), then we perform a ``macro-step'' 
of the CLE dynamics of size $\delta t$.  The final case 
is where there is both diffusion and jumps, we simulate 
the homogeneous dominating process with rate 
$\Lambda_0 = \sum_{r=1}^R\Lambda_r$, and accept/reject 
according to condition~(\ref{eq:thinning_condition}).  

The main advantage is that we avoid the error arising from 
the piecewise constant approximation of integral $T_r(t)$.  
In particular, one can use higher order methods for simulating 
the Langevin dynamics within the blending region to obtain 
a better weak order of convergence in $\Delta t$.  
The drawback of this approach is the necessity to know 
\emph{a priori} the upper bounds $\Lambda_r$, assuming such 
bounds exist. Care must be taken so that the bounds are 
not too pessimistic, otherwise the dominating homogeneous 
Poisson process will fire very rapidly when the system lies 
within a blending region. In such cases the bounds can be 
tuned by running exploratory simulations and keeping track 
of the acceptance rate for each reaction.

\begin{algorithm}
\SetAlgoLined
\BlankLine
Set $t = 0$.\\
\While{$t<T$}{
\uIf{$\max_{r=1,2,\ldots,R}\beta_r(\mathbf{X}(t)) = 0$}
{Simulate CLE \eqref{eq:cle} up to time $t+\delta t$.\\
Set $t = t + \delta t$.
}\uElseIf{$\min_{r=1,2,\ldots,R} \beta_r(\mathbf{X}(t)) = 1$}
{Let $\lambda_0 = \sum_{r=1}^R\lambda_r'(\mathbf{X}(t))$.\\
Let $u \sim U[0,1]$ and $\tau = -\log(u)/\lambda_0$.\\
Choose index $r$ with probability 
$\lambda'_r(\mathbf{X}(t))/\lambda_0$.\\
Set $\mathbf{X}(t+\tau) = \mathbf{X}(t) + \boldsymbol{\nu}_r$.\\
Set $t = t + \tau$.
}\uElse{
Let $\Lambda_0 = \sum_{r=1}^R \Lambda_r$.\\
Let $u \sim U[0,1]$ and set $\tau = -\log(u)/\Lambda_0$.\\
Simulate CLE (\ref{eq:cle1}) up to time $t + \tau$ using stepsize $\Delta t$.\\
Let $u' \sim U[0,1]$.\\
\If {$\Lambda_0 u' \leq \sum_{r=1}^R \lambda'_r(\mathbf{X}(t+\tau))$}{
Let $r$ be smallest index such that 
$\Lambda_0 u' \leq \sum_{j=1}^{r}\lambda_j'(\mathbf{X}(t+\tau)).$\\
Set $\mathbf{X}(t+\tau) = \mathbf{X}(t) + \boldsymbol{\nu}_r$.\\
Set $t = t + \tau$. }
}}
\caption{{\it generating approximate realisations of  
hybrid model $(\ref{eq:hybrid})$.
\label{alg:hybrid3}}}
\end{algorithm} 

\section{Numerical Investigations}

\noindent
We illustrate the main features of the hybrid framework 
described in Section \ref{subsec:stoc_fram} and demonstrate 
the use of Algorithms \ref{alg:hybrid}, \ref{alg:hybrid1} 
and \ref{alg:hybrid3} by considering three numerical examples.  
Each of these examples were implemented in the programming 
language {\sc Julia}~\cite{bezanson2012julia}. 

\subsection{Lotka-Volterra Model}
\label{sec:lotka_volterra}

\noindent
As a first example, we consider a chemical system consisting 
of two reacting chemical species $A$ and $B$ undergoing 
the following reactions
\begin{equation}
\begin{aligned} \label{eq:lotka}
&A  \xrightarrow{k_1} 2A, \qquad\qquad 
A + B \xrightarrow{k_2} 2B, \qquad\qquad
&B \xrightarrow{k_3} \emptyset.
\end{aligned}
\end{equation}
The chemicals $A$ and $B$ can be considered to be in a 
``predator-prey" relationship with $A$ and $B$ as 
prey and predator, respectively.  The reaction-rate 
equations corresponding to reactions (\ref{eq:lotka}) 
would then be the standard Lotka-Volterra model.  
We choose the dimensionless parameters $k_1 = 2.0,$  $k_2 = 0.002$ 
and $k_3 = 2.0.$ The initial condition is chosen to be 
$A(0)=50$ and $B(0)=60$.  
A histogram generated from $10^3$ independent SSA simulations 
of this system up to time $T = 5$ is shown in 
Figure~\ref{fig:lotka1}. The dashed line depicts the evolution 
of the deterministic reaction rate equation starting from the 
same initial point. One sees that the nonequilibrium dynamics 
force the system to spend time in both low  and high concentration 
regimes. Due to the time spent in states with high propensity, 
the SSA is computationally expensive to simulate. It is clear 
that away from the boundary, using an approximation such as the 
CLE would be computationally beneficial. The CLE corresponding 
to (\ref{eq:lotka}), choosing the multiplicative noise as described 
in \cite{gillespie2000chemical}, is given by
\begin{eqnarray*}
\mbd A(t)
 \!\!\!\! &=& \!\!\!\! 
\Big( k_1A(t) - k_2 A(t) B(t) \Big) \,\mbd t  
\! +\!  \sqrt{k_1 A(t)} \, \mbd W_1(t)
 \! - \! \sqrt{k_2 A(t)B(t)} \, \mbd W_2(t), \\
\mbd B(t)
 \!\!\!\! &=& \!\!\!\!  
\Big( k_2 A(t) B(t) - k_3 B(t) \Big)\,\mbd t  
\! - \! \sqrt{k_2 A(t)B(t)} \, \mbd W_2(t)
\! - \! \sqrt{k_3 B(t)} \, \mbd W_3(t),
\end{eqnarray*}
where $W_1(t)$, $W_2(t)$ and $W_3(t)$, 
are three standard independent Brownian motions. 
For a non-negative initial condition, the process 
$(A(t), B(t))$ will remain nonnegative, however, 
this will not be the case for fixed-timestep discretisation.  
In particular, an Euler discretisation ($A_n, B_n)$ will 
contain a term of the form $-\sqrt{k_3 B_n \Delta t} \, \xi$, 
where $\xi$ is a standard Gaussian random variable, 
which can cause the discretised process to cross the 
$B = 0$ axis if the process sufficiently close to this line. 
Thus, it is essential that reflective boundary conditions 
are imposed to ensure positivity. 
However, even if positivity is guaranteed, there is no 
reason to believe that the CLE will correctly approximate 
the dynamics near the axes. This motivates the use of the 
hybrid model to efficiently simulate this chemical system.

To simulate the hybrid model we use Algorithm \ref{alg:hybrid}, 
choosing blending functions $\beta_1$, $\beta_2$ and $\beta_3$ 
as described in (\ref{eq:multi_species_blending}). 
We simulate the Langevin dynamics in the blending region 
using a timestep of size $\Delta t = 10^{-3}$, and 
a timestep $\delta t = 10^{-2}$ outside the blending regions. 
In Figure \ref{fig:lotka_timedep}(a), we plot  $\mathbb{E}(A(t))$ 
at times $T = 1, 2, \ldots, 6$,  generated from $10^4$ independent 
realisations of each model. Error bars denote $95\%$ confidence 
intervals. The hybrid models were simulated for different values 
of $I_1^i$ and $I_2^i$, $i=1,2,3$, however, the results were 
not plotted as the Monte Carlo error was too large to distinguish 
between the schemes.  The hybrid scheme displayed in 
Figure~\ref{fig:lotka_timedep}(a) has blending regions with 
parameters $I_1^i = 25.0$ and $I_2^i = 35.0$.  While all models 
agree approximately at small times, 
for larger $T$ the averages generated from SSA and CLE differ 
significantly. Indeed, for $T=5$ (corresponding to a single 
period of the deterministic system) the means from the SSA 
and CLE differ by three orders of magnitude. On the other hand, 
the hybrid scheme remains in good agreement with the SSA. 
In Figure~\ref{fig:lotka_timedep}(b) we compare the average 
computational (CPU) time to simulate each model up to time $T$, 
averaged over $10^4$ realisations. This was measured in seconds, 
using the standard {\sc Julia} functions {\sc tic} and {\sc toc}. 
The computational cost of the hybrid scheme was plotted for 
three different choices of blending regions namely 
$(I_1^i, I_2^i) = (5, 15)$, $(10, 25)$ and $(25, 35)$, respectively. 
For small $T$ the SSA and hybrid schemes require a comparable amount 
of computational effort.  However, as $T$ increases, the computational 
cost of the SSA scheme dramatically increases, while the cost of 
the hybrid scheme remains approximately constant.  To compute the 
average value at time $T = 6$, the hybrid scheme was on average $2$ 
orders of magnitude cheaper to run.  As expected, the computational 
effort is smaller when the blending region is closer to the boundary.  
However, relative to the computational cost of the SSA and CLE, 
varying the blending region does not significantly alter performance. 
For these simulations, the value of $\Delta t$ and $\delta t$ where 
chosen manually by computing the error for a number of short 
exploratory runs. A more sophisticated implementation 
of the hybrid model would require an adaptive scheme for the 
Langevin part of the process.
 
\begin{figure}[t]
\centering
\includegraphics[width=0.75\textwidth]{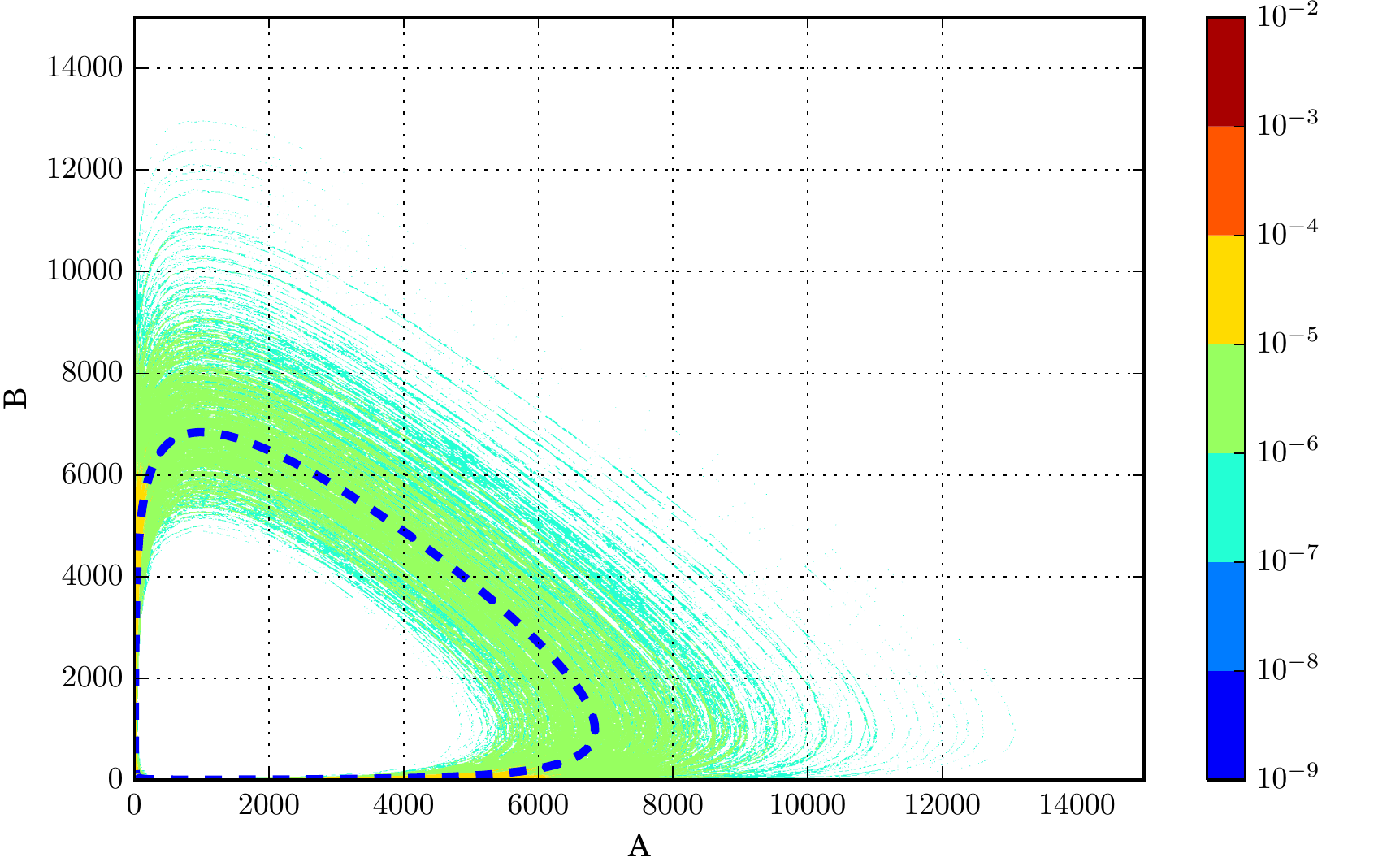}
\caption{{\it Histogram of $10^3$ SSA simulations of 
$\ref{eq:lotka}$ up to time $T = 5.0$ starting from 
$A(0)=50$ and $B(0)=60$.  The dashed line is the solution 
of the corresponding deterministic reaction rate equations.}}
\label{fig:lotka1}
\end{figure}

\begin{figure}[t]
\centerline{
\includegraphics[width=0.46\textwidth]{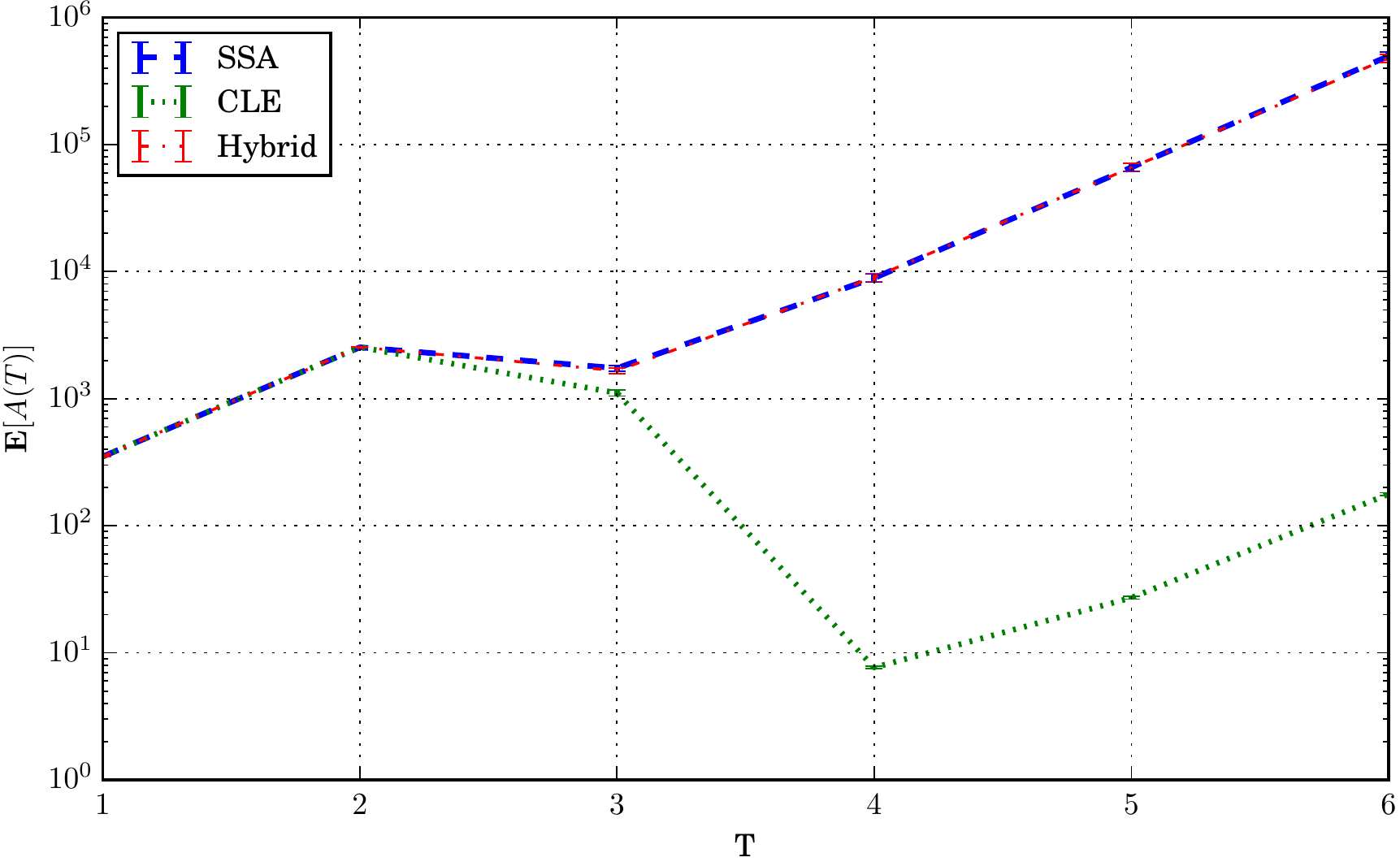}
\hskip 6mm
\includegraphics[width=0.46\textwidth]{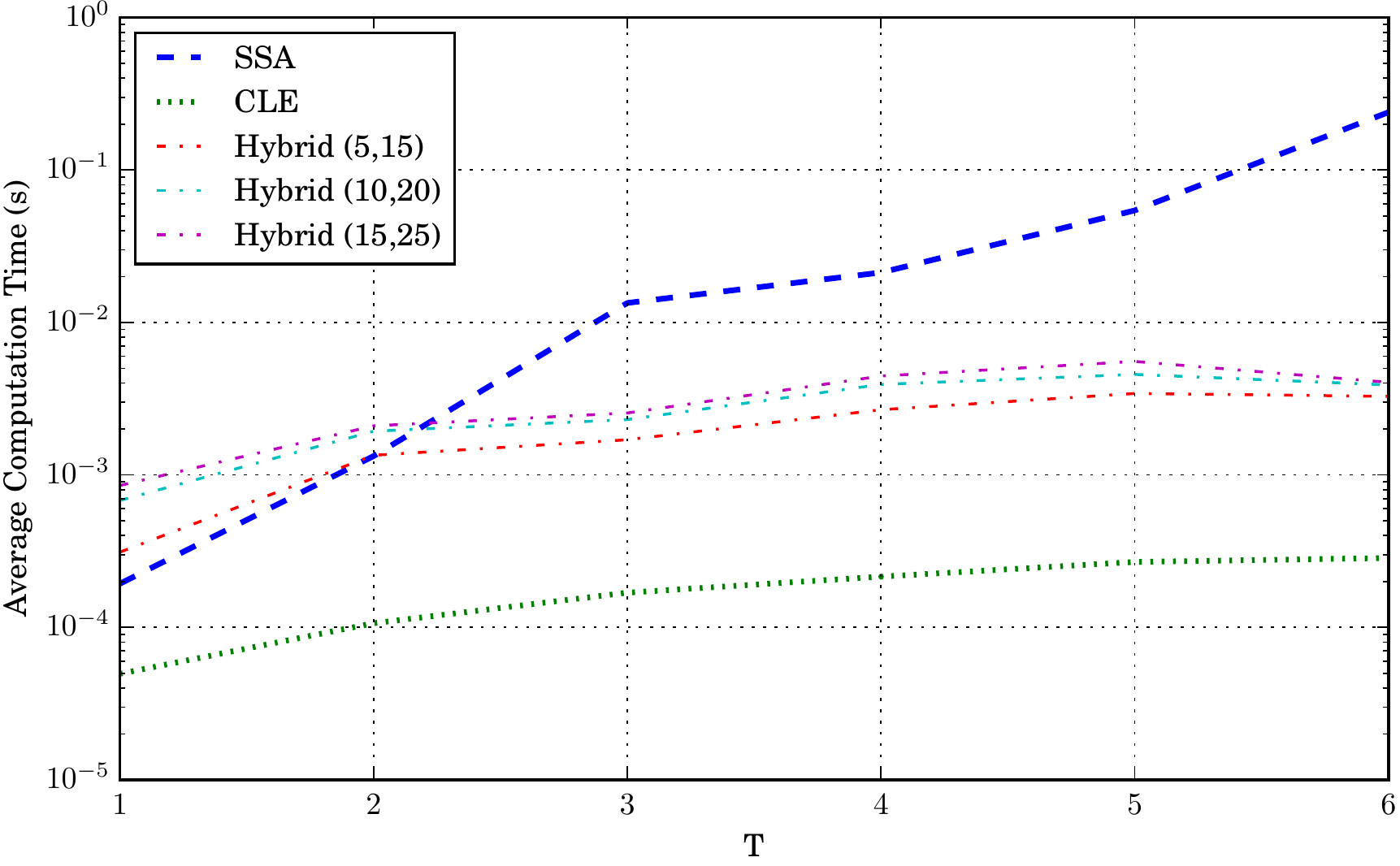}
}
\vskip -4.6cm
\leftline{\hskip 2mm (a) \hskip 6.2cm (b)}
\vskip 4cm
\caption{(a) 
{\it Plots of $\mathbb{E}(A(T))$ as a function of $T$ for the SSA, 
CLE and hybrid schemes;}
(b) {\it the corresponding average computational (CPU) time in seconds
as a function of $T$.}}
\label{fig:lotka_timedep}
\end{figure}

\subsection{Steady State Simulations}
\label{sec:dimer_steady_state}

\noindent
As a second example, we consider a chemical system consisting of 
two species $A$ and $B$ in a reactor of volume $V$. 
The species are subject to the following system of four chemical 
reactions~\cite{Erban:2007:PGS}:
\begin{equation}
\label{eq:dimer}
A + A \xrightarrow{k_1} \emptyset, 
\qquad 
A + B \xrightarrow{k_2} \emptyset,
\qquad
\emptyset \xrightarrow{k_3} A, 
\qquad 
\emptyset \xrightarrow{k_4} B.
\end{equation}
This corresponds to a jump process $\mathbf{X}(t)$ having 
stoichiometric vectors 
$$
\boldsymbol{\nu}_1 = (-2,0)^\top,
\quad \boldsymbol{\nu}_2  = (-1, -1)^\top,
\quad \boldsymbol{\nu}_3 = (1, 0)^\top,	
\quad \boldsymbol{\nu}_4 = (0,1)^\top,
$$
with corresponding propensities (depending on the volume $V$):
$$	
\lambda_1(a, b) = \frac{k_1a(a-1)}{V},	
\quad 
\lambda_2(a, b) = \frac{k_2 a b}{V},
\quad
\lambda_3(a, b) = k_3 V,
\quad 
\lambda_4(a,b) = k_4 V.
$$
The dimensionless reaction rates are given by $k_1 = 10^{-3}$, 
$k_2 = 10^{-2}$, $k_3 = 1.2$ and $k_4 = 1$.
As a first numerical experiment, we compute the evolution 
of the distribution of $(A,B)$ over time. We assume 
that $V = 0.25$, and that the initial distribution is 
a  ``discrete'' Gaussian mixture, namely the Gaussian 
mixture
\begin{equation}
\label{eq:dimer_initial}
\rho_0 = \mathcal{N}((30, 10), 1) + \mathcal{N}((20, 30), 1),
\end{equation}
restricted to the lattice $\mathbb{N}^2$. For each scheme 
(SSA, CLE and hybrid), the distribution is approximated by 
a histogram generated from $10^7$ independent realisations 
of the process. The CLE was simulated using the weak second 
order trapezoidal scheme described in \cite{anderson2011weak}. 
To ensure positivity of the CLE, reflective boundary conditions 
were imposed at the boundary of the  orthant. The hybrid scheme 
was simulated the hybrid next reaction scheme detailed in 
Algorithm \ref{alg:hybrid1}. The timestep was chosen to 
be $\Delta t = \delta t = 0.1$. The blending functions for 
the hybrid scheme were chosen according 
to (\ref{eq:multi_species_blending}), with $I_1^1 = I_1^2 = 5$ 
and $I_2^1 = I_2^1 = 10$.  In Figure~\ref{fig:dimer_timedep} 
we plot the distribution approximated using each scheme at 
times $t = 0, 1, 10$ and $100$. As expected, when the 
concentrations of $A$ and $B$ remain abundant, all three models 
agree. As the distribution approaches the low concentration 
regions, the discrete nature of the chemical system becomes 
important, and the CLE is no longer able to correctly capture 
the dynamics. Indeed, at time $100$ one observes a significant 
difference between the SSA and CLE distributions. On the other 
hand, the hybrid scheme provides a good approximation to the 
SSA at all times, but benefiting from a computational advantage 
in the large concentration regimes.

\begin{figure} [htp]
\centering
 \includegraphics[width=\textwidth]{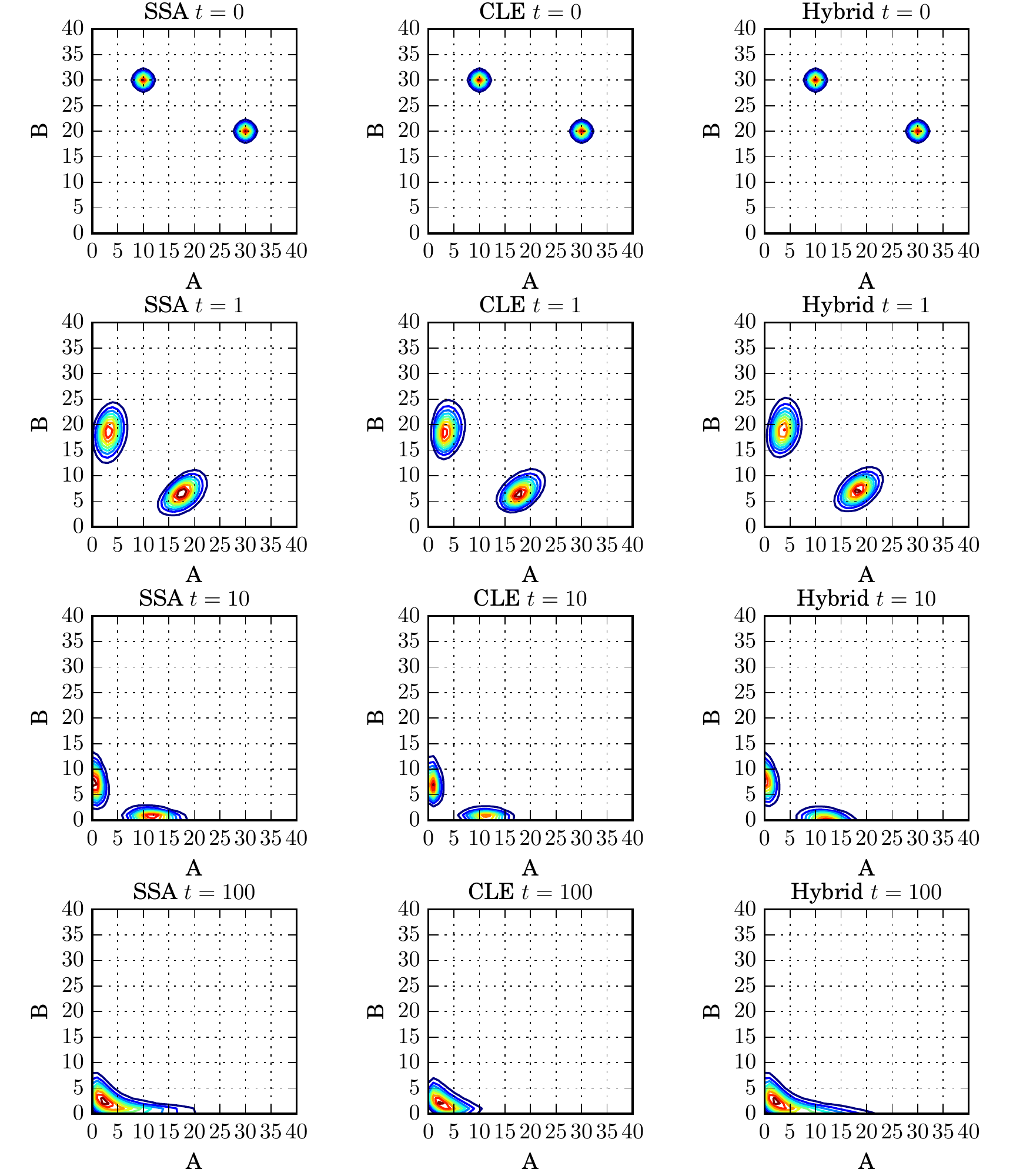}
\caption{{\it Time evolution of the biochemical system 
$(\ref{eq:dimer})$ using SSA, CLE and the hybrid scheme starting 
from initial distribution $\rho_0$ given by 
$(\ref{eq:dimer_initial})$.}}
\label{fig:dimer_timedep}
\end{figure}

The corresponding Markov jump process $\mathbf{X}(t)$ can be 
shown to possess a unique stationary steady 
state~\cite{Erban:2007:PGS}. We use all three models to compute 
the first two moments $M_1$ and $M_2$ of the stationary 
distribution, for decreasing values of $V$. The moments 
were approximated using ergodic average of the discretised 
schemes, i.e.
$$
M_1 \approx \frac{1}{T}\sum_{t_i \leq T}(t_{i+1} - t_{i}) A_{i} 
\mbox{ and } 
M_2 \approx \frac{1}{T}
\sum_{t_i \leq T}(t_{i+1} - t_{i})  \left(A_{i}\right)^2,
$$
where $(A_i,B_i)$ is the value of the discretised process at 
time $t_i$ and $0 < t_1 < t_2 < \ldots < t_N = T$ are the jump 
times of the process. Each process was simulated up to 
time $T = 10^7$. For the hybrid and CLE schemes a timestep 
of $\Delta t = 0.1$ was used throughout. The blending region 
was chosen as in the previous example. The first and second 
moment are plotted in Figure \ref{fig:dimer_stat} for 
$V = 2^{-i}$ where $i=0,1,2\ldots, 8$.  While there is 
good agreement between all three schemes for $V$ large, 
the CLE consistently overestimates the moments when $V$ 
is small. On the other hand, the hybrid scheme remains 
robust to this rescaling.

\begin{figure}[t]
\centerline{
\includegraphics[width=0.46\textwidth]{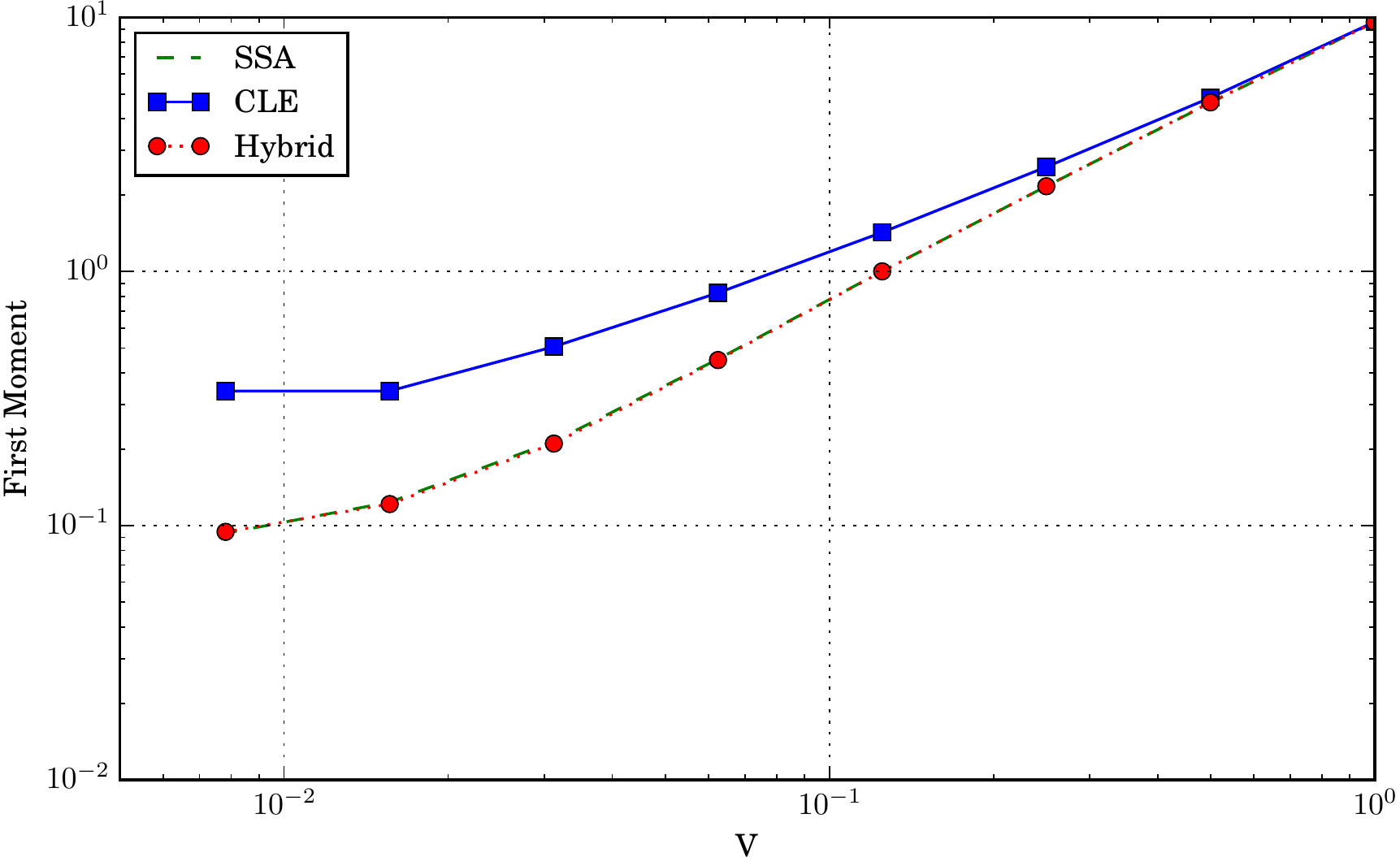}
\hskip 6mm
\includegraphics[width=0.46\textwidth]{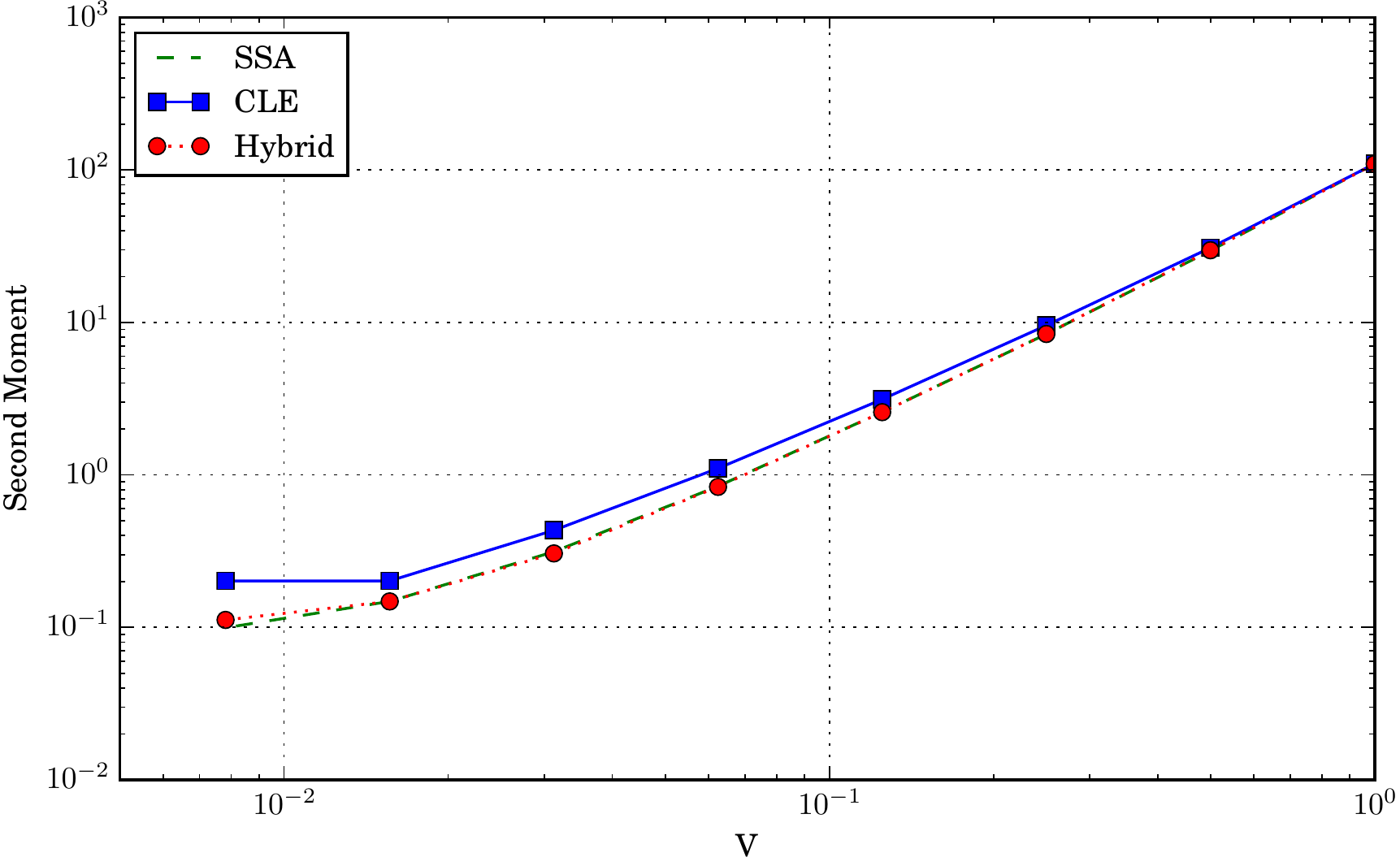}
}
\vskip -4.6cm
\leftline{\hskip 2mm (a) \hskip 6.2cm (b)}
\vskip 4cm
\caption{(a) 
{\it Comparison of the first moments for the three different 
schemes (SSA, CLE and hybrid) for simulation of the biochemical 
system $(\ref{eq:dimer})$ with parameters as in 
Figure~$\ref{fig:dimer_timedep}$ for $V = 2^{-i}$ for $i=0,1,2\ldots, 8$.
}
(b) {\it Comparison of the second moments as a function of $V$.}}    
\label{fig:dimer_stat}
\end{figure}

\subsection{Exit time calculation for the birth-death problem}
\label{sec:exit_time}

\noindent
As a final example, we consider the problem of computing 
the mean extinction time (MET) for a one-dimensional 
birth-death process, namely a system two reactions for one chemical
species $A$:
\begin{equation}
\label{eq:bd}
\begin{aligned}
A \rightarrow \emptyset &\mbox{ with propensity } \lambda_1(n) = k_1 n,\\
\emptyset \rightarrow A &\mbox{ with propensity } \lambda_2(n) = k_2  
\mbox{ if } 0 < n < K, \mbox{ otherwise } 0.
\end{aligned}
\end{equation}
By assuming that $\lambda_2(n) = 0$ for all $n \geq K$, 
the state of the system lies within the finite domain 
$\lbrace 0, 1, 2,\ldots, K \rbrace$. For $k_1 > 0$, the 
birth-death process will hit the extinction state 
$x = 0$ with probability $1$.  We denote by $\mbox{MET}(n)$ 
the MET of the process starting from $A(0)=n$.  
Following directly the approach 
of~\cite[Section 2.1]{doering2005extinction} 
(also see~\cite{Gardiner:1985:HSM,vanKampen:2007:SPP}), 
we obtain
\begin{equation}
\label{eq:bd_mfpt}
\mbox{MET}(n) 
= 
\frac{1}{k_1}
\sum_{m=1}^{n}
\sum_{j=0}^{K-m+1}
\left( \frac{k_1}{k_2} \right)^j
\frac{(m-1)!}{(j+m-1)!},
\qquad
\mbox{ for } n = 1,2, \dots, K.
\end{equation}
The corresponding CLE is given by
\begin{equation}
\label{eq:cle_bd}
\mbd
Y(t) 
= (k_1 - k_2 \, Y(t))\,\mbd t 
+ \sqrt{k_1}\, \mbd W_1(t) - \sqrt{k_2 \, Y(t)}\,\mbd W_2(t),
\end{equation}
for standard independent Brownian motions $W_1(t)$ and 
$W_2(t)$.  The mean first time of $Y(t)$ reaching 
$0$ starting from $Y(0) = x$ can be calculated explicitly 
as
\begin{equation}
\label{eq:bd_cle_mfpt}
2\int_0^x e^{-\Phi(y)}\int_{y}^{K}\frac{e^{\Phi(z)}}{k_2 + k_1 z}\,dz\,\mbd y,
\end{equation}
where $\Phi$ is the potential
$$
\Phi(x) 
= 
\frac{4 k_2}{k_1} \log \left(1 + \frac{k_1 x}{k_2} \right) - 2  x.
$$
Since $n_e = k_2/k_1$ is a unique solution of $\lambda_1(n) = \lambda_2(n)$,
the stochastic birth-death process will fluctuate around 
$n_e$ for a long time before eventually going extinct.  
In general, we expect that any approximation which 
correctly describes the extinction time behaviour of 
the birth-death process must accurately capture the 
behaviour of the process particularly near $n = 0$ 
and $n = n_e$ (and possibly all points in between).  
If $n_e$ is large, then a Gaussian approximation 
(e.g. CLE or the system size expansion) would accurately 
capture the fluctuations around the quasi-equilibrium. 
However, as observed in \cite{doering2005extinction} 
such approximations would suffer close to $n = 0$. 
This suggests that the hybrid scheme with a blending 
region supported between $n = 0$ and $n = n_e$ would be 
a good candidate for an approximation to the process. 
Similar observations have been in more general chemical 
systems~\cite{Hinch:2005:EST,hanggi1984bistable}, where 
it is observed that diffusion approximations of jump 
processes are not able to correctly capture rare events, 
even when the system is in a regime where the CLE correctly 
captures both the transient and stationary dynamics of the process.   

To test the hybrid scheme for MET problems,  we consider 
the above birth-death process $\mathbf{X}(t)$ with $k_1 = 1$, 
so that $n_e = k_2$. We compute the mean time of the birth-death 
process starting from $n_e$ to extinction. 
Following the discussion at the end of Section \ref{sec:prelims}, 
the hybrid scheme is considered extinct when the process 
satisfies $\llbracket \mathbf{Z}(t) \rrbracket = 0$. 
We choose the blending functions according to
(\ref{eq:multi_species_blending}), simulating the process 
for different values of $I_1^i$ and $I_2^i$. Since the blending 
region is bounded, we can use the thinning-based 
Algorithm \ref{alg:hybrid3} to simulate the jump-diffusion process 
within the blending region, choosing 
$\Lambda_1 = k_1 I_2^1$ and $\Lambda_2 = k_2$.  
A timestep of $\delta t = \Delta t = 10^{-2}$ is chosen throughout. 

In Figure~\ref{fig:mfpt_bd}, we plot the MET of the hybrid 
scheme for varying {$k_2$}, each point generated from $10^5$ 
independent realisations, and for different choices of blending 
regions. The MET for the CME and CLE, computed directly 
from (\ref{eq:bd_mfpt}) and (\ref{eq:bd_cle_mfpt}), respectively, 
are shown for comparison. It is evident from the numerical 
experiments that the hybrid scheme provides a better 
approximation for the MER compared to the CLE. However, 
the improvement is not uniform over all timescales: the region 
in which the jump process is simulated must be increased 
to correctly capture rare events. Figure \ref{fig:mfpt_bd} 
suggests that the width of the blending region also plays 
a role in the simulation. Indeed, a blending region 
with $(I_1^i, I_2^i) = (3,5)$ appears to be sufficient 
to accurately estimate the MET up to $k_1 = 9$, although 
it is likely this approximation will break down, if
$k_2$ is increased further.  The necessity of tuning 
the blending region to capture the escape time dynamics 
is a disadvantage.  Nonetheless, the hybrid scheme 
provides us with an approach for improving the MET estimate 
obtained from the CLE, at the ``cost'' of having to 
simulate discrete jumps in (increasingly large) 
regions of the domain.  

\begin{figure}[t]
\centerline{
\includegraphics[width=0.46\textwidth]{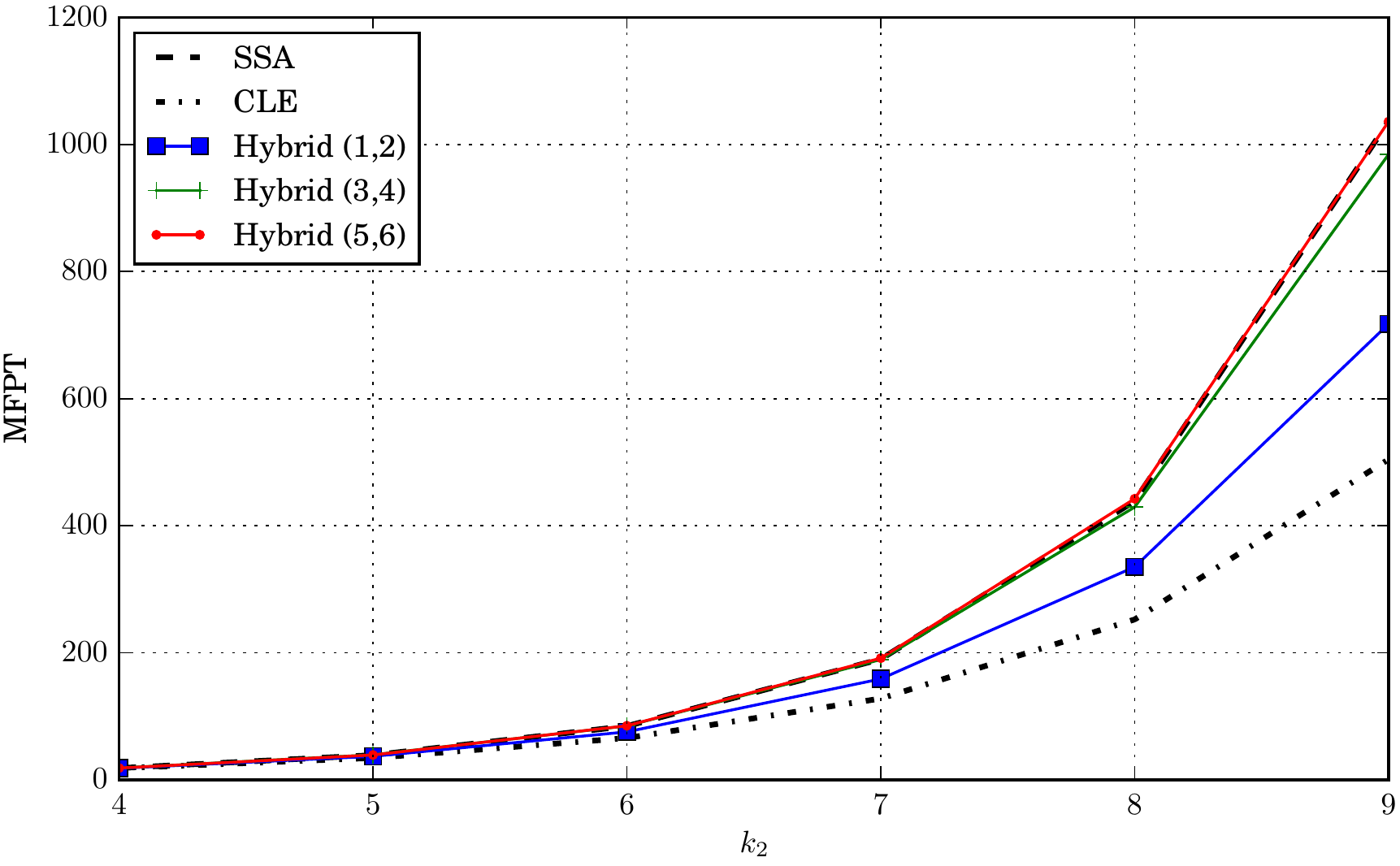}
\hskip 6mm
\includegraphics[width=0.46\textwidth]{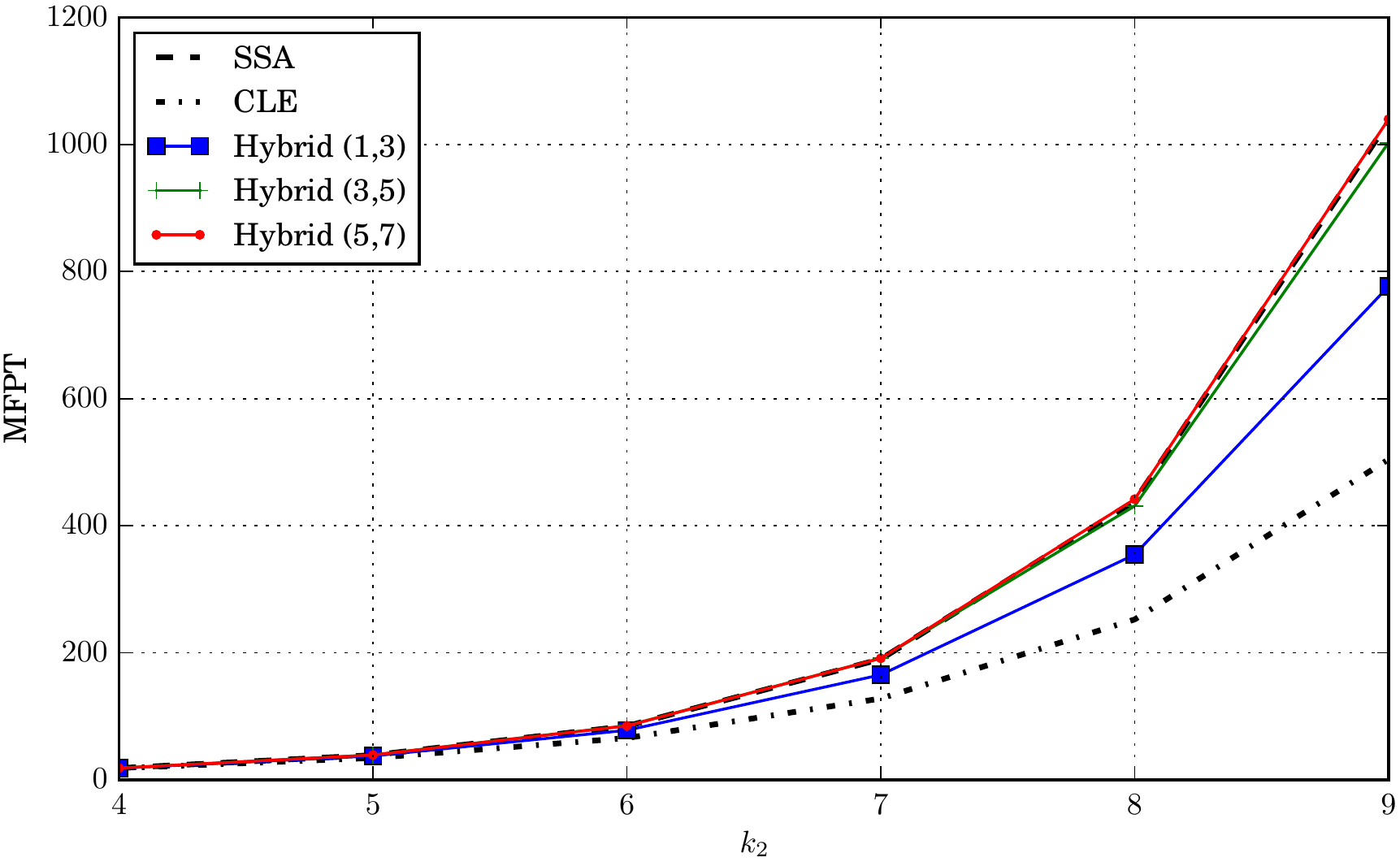}
}
\vskip -4.6cm
\leftline{\hskip 2mm (a) \hskip 6.2cm (b)}
\vskip 4cm
\caption{{\it Comparison of the METs of the birth-death process 
$(\ref{eq:bd})$, the corresponding CLE $(\ref{eq:cle_bd})$ and 
the hybrid scheme. The parameters used are {$k_2 = 4, 5, 6, 7, 8, 9$} 
and $k_1=1$ and the process is started from $A(0)=k_2$. We use:}
(a) {\it blending regions of width $1$;}
(b) {\it blending regions of width $2$.}
}
\label{fig:mfpt_bd}
\end{figure}

\section{Conclusions}
\label{sec:conclusion}

\noindent
In this paper we have introduced a jump-diffusion model for 
simulating multiscale reaction systems efficiently while 
still accounting for the discrete fluctuations where necessary. 
Fast reactions are simulated using the CLE, while the standard 
discrete description is used for slow ones. 
Our approach involves the introduction of a set of blending 
functions~(\ref{eq:multi_species_blending}) which allow one 
to make explicit in which regions the continuum approximation 
should be expected to hold.

Based on the representation of the Markov jump process as a 
time changed Poisson process, we described three different schemes, 
based on \cite{Gillespie:1977:ESS,anderson2007modified} to 
numerically simulate the jump-diffusion model in the three 
different regimes (discrete, continuous and hybrid). 
To demonstrate the efficacy of the schemes, we  simulated 
equilibrium distributions of chemical systems and computed 
extinction times of chemical species for illustrative chemical
systems. The results suggest that the proposed algorithm is 
robust, and is able to handle multiscale processes efficiently 
without the breakdown associated when using the CLE directly.

\section*{Acknowledgements}

\noindent 
The research leading to these results has received funding from 
the European Research Council under the \textit{European Community}'s 
Seventh Framework Programme ({\it FP7/2007-2013})/ERC {\it grant agreement} 
n$^\circ$ 239870. Radek Erban would also like to thank the Royal Society 
for a University Research Fellowship and the Leverhulme Trust for 
a Philip Leverhulme Prize. Andrew Duncan was also supported by 
the EPSRC grant EP/J009636/1.

{\small
\section*{References}

}

\end{document}